\documentclass[aps,notitlepage,twocolumn,nofootinbib,superscriptaddress,longbibliography]{revtex4-2}
\usepackage{ulem}
\usepackage{algorithmic}
\usepackage[ruled, vlined, linesnumbered]{algorithm2e}

\usepackage{amsmath,amsfonts,amssymb}
\usepackage{mathtools}
\usepackage{multirow}
\usepackage[utf8]{inputenc}
\usepackage[T1]{fontenc}

\usepackage{bm}
\usepackage{times}

\usepackage{mathtools}
\usepackage{amsmath}
\usepackage[shortlabels]{enumitem}

\usepackage{graphicx,epic,eepic,epsfig,amsmath,latexsym,amssymb,verbatim,color}
 
\usepackage{amsfonts}       
\usepackage{nicefrac}       

\usepackage{amsmath}
\usepackage{bbm}

\usepackage{float}
\usepackage{tikz}
\usetikzlibrary{chains}
\usetikzlibrary{fit}
\usetikzlibrary{arrows}

\usepackage{epsfig}
\usetikzlibrary{shapes.symbols,patterns} 
\usepackage{pgfplots}
\pgfplotsset{compat=1.18}

\usepackage[strict]{changepage}
\usepackage{hyperref}
\hypersetup{colorlinks=true,citecolor=blue,linkcolor=blue,filecolor=blue,urlcolor=blue,breaklinks=true}

\usepackage[marginal]{footmisc}
\usepackage{url}
\usepackage{theorem}

\newtheorem{proposition}{Proposition}
\newtheorem{lemma}[proposition]{Lemma}

\newtheorem{theorem}[proposition]{Theorem}

\def\squareforqed{\hbox{\rlap{$\sqcap$}$\sqcup$}}
\def\qed{\ifmmode\squareforqed\else{\unskip\nobreak\hfil
\penalty50\hskip1em\null\nobreak\hfil\squareforqed
\parfillskip=0pt\finalhyphendemerits=0\endgraf}\fi}
\def\endenv{\ifmmode\;\else{\unskip\nobreak\hfil
\penalty50\hskip1em\null\nobreak\hfil\;
\parfillskip=0pt\finalhyphendemerits=0\endgraf}\fi}
\newenvironment{proof}{\noindent \textbf{{Proof~} }}{\hfill $\blacksquare$}

\newcounter{remark}

\newcounter{example}

\mathchardef\ordinarycolon\mathcode`\:
\mathcode`\:=\string"8000
\def\vcentcolon{\mathrel{\mathop\ordinarycolon}}
\begingroup \catcode`\:=\active
  \lowercase{\endgroup
  \let :\vcentcolon
  }

\usepackage{cleveref}
\usepackage{graphicx}
\usepackage{xcolor}

\RequirePackage[framemethod=default]{mdframed}
\newmdenv[skipabove=7pt,
skipbelow=7pt,
backgroundcolor=darkblue!15,
innerleftmargin=5pt,
innerrightmargin=5pt,
innertopmargin=5pt,
leftmargin=0cm,
rightmargin=0cm,
innerbottommargin=5pt,
linewidth=1pt]{tBox}

\newmdenv[skipabove=7pt,
skipbelow=7pt,
backgroundcolor=blue2!25,
innerleftmargin=5pt,
innerrightmargin=5pt,
innertopmargin=5pt,
leftmargin=0cm,
rightmargin=0cm,
innerbottommargin=5pt,
linewidth=1pt]{dBox}
\newmdenv[skipabove=7pt,
skipbelow=7pt,
backgroundcolor=darkkblue!15,
innerleftmargin=5pt,
innerrightmargin=5pt,
innertopmargin=5pt,
leftmargin=0cm,
rightmargin=0cm,
innerbottommargin=5pt,
linewidth=1pt]{sBox}
\definecolor{darkblue}{RGB}{0,76,156}
\definecolor{darkkblue}{RGB}{0,0,153}
\definecolor{blue2}{RGB}{102,178,255}
\definecolor{darkred}{RGB}{195,0,0}

\newcommand{\nc}{\newcommand}
\nc{\rnc}{\renewcommand}
\nc{\lbar}[1]{\overline{#1}}
\nc{\bra}[1]{\langle#1|}
\nc{\ket}[1]{|#1\rangle}
\nc{\ketbra}[2]{|#1\rangle\!\langle#2|}
\nc{\braket}[2]{\langle#1|#2\rangle}

\nc{\proj}[1]{| #1\rangle\!\langle #1 |}
\nc{\avg}[1]{\langle#1\rangle}
\nc{\rank}{\operatorname{Rank}}
\nc{\smfrac}[2]{\mbox{$\frac{#1}{#2}$}}
\nc{\tr}{\operatorname{Tr}}
\nc{\ox}{\otimes}
\nc{\dg}{\dagger}
\nc{\dn}{\downarrow}
\nc{\cA}{{\cal A}}
\nc{\cB}{{\cal B}}
\nc{\cC}{{\cal C}}
\nc{\cD}{{\cal D}}
\nc{\cE}{{\cal E}}
\nc{\cF}{{\cal F}}
\nc{\cG}{{\cal G}}
\nc{\cH}{{\cal H}}
\nc{\cI}{{\cal I}}
\nc{\cJ}{{\cal J}}
\nc{\cK}{{\cal K}}
\nc{\cL}{{\cal L}}
\nc{\cM}{{\cal M}}
\nc{\cN}{{\cal N}}
\nc{\cO}{{\cal O}}
\nc{\cP}{{\cal P}}
\nc{\cQ}{{\cal Q}}
\nc{\cR}{{\cal R}}
\nc{\cS}{{\cal S}}
\nc{\cT}{{\cal T}}
\nc{\cU}{{\cal U}}
\nc{\cV}{{\cal V}}
\nc{\cX}{{\cal X}}
\nc{\cY}{{\cal Y}}
\nc{\cZ}{{\cal Z}}
\nc{\cW}{{\cal W}}

\newcommand{\bbE}{\mathbb{E}}
\newcommand{\bbP}{\mathbb{P}}

\nc{\csupp}{{\operatorname{csupp}}}
\nc{\qsupp}{{\operatorname{qsupp}}}
\nc{\var}{{\operatorname{var}}}
\nc{\rar}{\rightarrow}
\nc{\lrar}{\longrightarrow}
\nc{\polylog}{{\operatorname{polylog}}}
\nc{\wt}{{\operatorname{wt}}}
\nc{\av}[1]{{\left\langle {#1} \right\rangle}}
\nc{\supp}{{\operatorname{supp}}}

\nc{\argmin}{{\operatorname{argmin}}}

\def\x{\xi}

\nc{\RR}{{{\mathbb R}}}
\nc{\CC}{{{\mathbb C}}}
\nc{\FF}{{{\mathbb F}}}
\nc{\NN}{{{\mathbb N}}}
\nc{\ZZ}{{{\mathbb Z}}}
\nc{\PP}{{{\mathbb P}}}
\nc{\QQ}{{{\mathbb Q}}}
\nc{\UU}{{{\mathbb U}}}
\nc{\EE}{{{\mathbb E}}}
\nc{\id}{{\operatorname{id}}}

\nc{\CHSH}{{\operatorname{CHSH}}}

\nc{\be}{\begin{equation}}
\nc{\ee}{{\end{equation}}}
\nc{\bea}{\begin{eqnarray}}
\nc{\eea}{\end{eqnarray}}
\nc{\<}{\langle}
\rnc{\>}{\rangle}
\nc{\rU}{\mbox{U}}

\nc{\ob}[1]{#1}

\nc{\SEP}{{\text{\rm SEP}}}
\nc{\NS}{{\text{\rm NS}}}
\nc{\LOCC}{{\text{\rm LOCC}}}
\nc{\PPT}{{\text{\rm PPT}}}
\nc{\EXT}{{\text{\rm EXT}}}
\nc{\Sym}{{\operatorname{Sym}}}

\nc{\ERLO}{{E_{\text{r,LO}}}}
\nc{\ERLOCC}{{E_{\text{r,LOCC}}}}
\nc{\ERPPT}{{E_{\text{r,PPT}}}}
\nc{\ERLOCCinfty}{{E^{\infty}_{\text{r,LOCC}}}}
\nc{\Aram}{{\operatorname{\sf A}}}

\usepackage{tikz}
\usepackage{hyperref}
\hypersetup{colorlinks=true,citecolor=blue,linkcolor=blue,filecolor=blue,urlcolor=blue,breaklinks=true}

\makeatletter
\def\grd@save@target#1{%
  \def\grd@target{#1}}
\def\grd@save@start#1{%
  \def\grd@start{#1}}
\tikzset{
  grid with coordinates/.style={
    to path={%
      \pgfextra{%
        \edef\grd@@target{(\tikztotarget)}%
        \tikz@scan@one@point\grd@save@target\grd@@target\relax
        \edef\grd@@start{(\tikztostart)}%
        \tikz@scan@one@point\grd@save@start\grd@@start\relax
        \draw[minor help lines,magenta] (\tikztostart) grid (\tikztotarget);
        \draw[major help lines] (\tikztostart) grid (\tikztotarget);
        \grd@start
        \pgfmathsetmacro{\grd@xa}{\the\pgf@x/1cm}
        \pgfmathsetmacro{\grd@ya}{\the\pgf@y/1cm}
        \grd@target
        \pgfmathsetmacro{\grd@xb}{\the\pgf@x/1cm}
        \pgfmathsetmacro{\grd@yb}{\the\pgf@y/1cm}
        \pgfmathsetmacro{\grd@xc}{\grd@xa + \pgfkeysvalueof{/tikz/grid with coordinates/major step}}
        \pgfmathsetmacro{\grd@yc}{\grd@ya + \pgfkeysvalueof{/tikz/grid with coordinates/major step}}
        \foreach \x in {\grd@xa,\grd@xc,...,\grd@xb}
        \node[anchor=north] at (\x,\grd@ya) {\pgfmathprintnumber{\x}};
        \foreach \y in {\grd@ya,\grd@yc,...,\grd@yb}
        \node[anchor=east] at (\grd@xa,\y) {\pgfmathprintnumber{\y}};
      }
    }
  },
  minor help lines/.style={
    help lines,
    step=\pgfkeysvalueof{/tikz/grid with coordinates/minor step}
  },
  major help lines/.style={
    help lines,
    line width=\pgfkeysvalueof{/tikz/grid with coordinates/major line width},
    step=\pgfkeysvalueof{/tikz/grid with coordinates/major step}
  },
  grid with coordinates/.cd,
  minor step/.initial=.2,
  major step/.initial=1,
  major line width/.initial=2pt,
}
\makeatother

\usepackage{thmtools}
\usepackage{thm-restate}
\usepackage{etoolbox}
\makeatletter
\def\problem@s{}
\newcounter{problems@cnt}

\newcommand{\allproblems}{\problem@s}
\makeatother

\definecolor{beamer}{rgb}{0.2,0.2,0.7}
\definecolor{colorone}{rgb}{1,0.36,0.03}
\definecolor{colortwo}{rgb}{0.4,0.77,0.17}
\definecolor{colorthree}{rgb}{0.01,0.51,0.93}
\definecolor{colorfour}{rgb}{0.47,0.26,0.58}
\definecolor{colorfive}{rgb}{0.12,0.55,0.16}
\usepackage{tcolorbox}
\usepackage{relsize}
\usepackage{graphicx}
\usepackage{booktabs}
\usepackage{amsmath}
\usepackage{tikz}
\usetikzlibrary{quantikz}
\nc{\st}{\text{subject to} \ }
\nc{\supre}{\text{supremum} \ }
\nc{\sdp}{\text{sdp}}
\usepackage{array}

\newcommand{\veps}{\varepsilon}

\newcommand{\vtheta}{\vartheta}

\newcommand{\ri}{i} 

\newcommand{\hS}{\hat{S}}

\newcommand{\stat}{\mathrm{stat}}
\newcommand{\sys}{\mathrm{sys}}

\newcommand{\GHZ}{\mathrm{GHZ}}

\newcommand{\hw}{\mathrm{hw}} 
\newcommand{\bias}{\mathrm{bias}}

\newcommand{\htheta}{\hat{\theta}}
\newcommand{\fail}{\mathrm{fail}}
\newcommand{\std}{\mathrm{std}} 

\newcommand{\bigO}{\mathcal{O}}
\newcommand{\tar}{\mathrm{tar}}

\allowdisplaybreaks

\nc{\ith}[1]{{#1}^\mathrm{th}}

\begin{document}
\title{Efficient quantum phase estimation with adaptive entanglement-assisted Hadamard test}

\author{Hengzhun Chen}
\affiliation{School of Mathematical Sciences, Fudan University}

\author{Benchi Zhao}
\email{benchizhao@gmail.com}
\affiliation{QICI Quantum Information and Computation Initiative, School of Computing and Data Science, The University of Hong Kong, Pokfulam Road, Hong Kong}

\author{Yingzhou Li}
\email{yingzhouli@fudan.edu.cn}
\affiliation{School of Mathematical Sciences, Fudan University}
\affiliation{Shanghai Key Laboratory for Contemporary Applied Mathematics}

\begin{abstract}
The entanglement-assisted Hadamard test (EHT) is a practical method for estimating a quantum phase by amplifying the phase signal. However, the feasible amplification is fundamentally limited by the accuracy of the reference phase, such that the method is inefficient in the high-precision regime. In this work, we propose an algorithm, called \textit{adaptive entanglement-assisted Hadamard test} (AEHT), that iteratively refines the reference phase, enabling progressively stronger amplification as the iteration goes by. We further consider the imperfect eigenstate preparation scenario, where a systematic bias is unavoidable when estimating the quantum phase with the conventional EHT. Such a bias can be suppressed by the proposed AEHT. Moreover, taking physical implementation into consideration, we adopt \textit{device-restart count} to measure the cost of quantum phase estimation, rather than shot count. The numerical experiments confirm the effectiveness of the proposed AEHT compared with conventional methods under this measure. By unlocking the full amplification power of large entangled states, this work offers an efficient method to estimate high-precision quantum phase on near-term quantum processors.
\end{abstract}

\maketitle

\section{Introduction}
Quantum computation is expected to offer exponential advantages over its classical counterparts, potentially enabling the efficient solution of problems that are intractable for classical computers. Quantum phase estimation (QPE) is one of the most fundamental subroutines in quantum computing, providing a systematic procedure for extracting the eigenphase of a unitary operator given access to the corresponding eigenstate~\cite{kitaev1995quantum,nielsen2010quantum,dorner2009optimal, abrams1999quantum}. QPE constitutes the computational backbone of Shor's factoring algorithm~\cite{shor1999polynomial, lanyon2007experimental, martin2012experimental,monz2016realization}, the Harrow–Hassidim–Lloyd algorithm for linear systems of equations~\cite{harrow2009quantum, cai2013experimental}, quantum entropy estimation~\cite{wang2023quantum}, and quantum simulation of molecular Hamiltonians~\cite{cruz2020optimizing, santagati2018witnessing, lanyon2010towards, du2010nmr, o2016scalable, PhysRevA.110.L060401}. However, estimating the quantum phase by conventional QPE~\cite{kitaev1995quantum, nielsen2010quantum} demands coherence times that scale exponentially with the desired precision and are highly sensitive to experimental noise~\cite{wiebe2016efficient, rubin2007loss}. Consequently, without full quantum error correction, conventional QPE algorithms quickly become impractical, which severely limits their applicability on near-term quantum devices.

Standard Hadamard test (SHT)~\cite{dobvsivcek2007arbitrary, nielsen2010quantum} provides a practical alternative with shallow circuits to estimate the quantum phase, which extracts the quantum phase $\theta$ through repeated measurements (as shown in FIG.~\ref{fig:hadamard-test}). Specifically, suppose a coarse reference phase $\vtheta$ is known, then by conducting a quantum circuit for multiple shot counts, the expectation value $\hat{S}$ corresponds to the estimate of $\sin(\theta-\vtheta)$, and therefore the estimated phase can be directly calculated $\hat{\theta}=\arcsin(\hat{S})+\vtheta$. Its statistical error nevertheless decreases only as the inverse square root of the shot counts, so estimating a phase to accuracy $\epsilon$ generally requires $\cO(1/\epsilon^{2})$ shot counts. 

For a further step, entanglement-assisted Hadamard test (EHT)~\cite{giovannetti2011advances, giovannetti2004quantum, campos2003optical, nielsen2023deterministic} use $m$-qubit Greenberger-Horne-Zeilinger (GHZ) entangled state and parallel controlled unitary to amplify phase signal, i.e., the expectation value $\hat{S}$ corresponds to the estimate of $\sin(m(\theta-\vtheta))$, and the quantum phase is estimated as $\hat{\theta}=\frac{1}{m}\arcsin(\hat{S})+\vtheta$ (as shown in FIG.~\ref{fig:ghz-hadamard-general}). With the assistance of quantum entanglement, to achieve an estimated phase with accuracy $\epsilon$, the required number of shot counts reduces to $\cO(1/(m\epsilon)^2)$. The larger the GHZ state that is utilized, the fewer shot counts are required. But in practice, the GHZ state size $m$ cannot be infinitely large, as large $m$ leads to $m|\theta-\vtheta|> \frac{\pi}{2}$, making the $\arcsin (\hat{S})$ ambiguous. In other words, a coarse reference phase therefore permits only a small amplification, preventing the direct use of large GHZ states even when sufficient hardware resources are available. This limitation becomes particularly important when high precision is required. The ambiguity problem can be avoided by estimating both $\sin(\theta-\vtheta)$ and $\cos(\theta-\vtheta)$. But meanwhile, the shot counts are in doubt, making it inefficient.

In this work, we address this limitation by developing an algorithm, called \textit{adaptive entanglement-assisted  Hadamard test} (AEHT).  Starting from a coarse reference phase, we first use a modestly sized GHZ state whose amplification is safe given the current uncertainty. The measurements sharpen the estimate, and this improved estimate becomes the reference for the next round. With a tighter reference, a larger GHZ state becomes admissible, so the next round can amplify the remaining phase signal more strongly. Repeating this process allows the amplification to grow as the reference improves, keeping the required shot counts under control even at high precision.
We also extend this procedure to the more realistic situation in which the prepared state is not an exact eigenstate. There the Hadamard signal has the form $\rho e^{i\theta_\psi}$, where the amplitude $\rho$ is smaller than one. In reality, the amplitude is generally unknown, inducing a bias in phase estimation that cannot be removed even if more shot counts are provided. We show that the proposed AEHT algorithm suppresses this bias, such that the estimation error continues to decrease round by round.

Adaptive phase estimation has been explored before, notably in entanglement-free interferometric schemes that use repeated phase shifts and Bayesian feedback~\cite{higgins2007entanglement}. Those methods, however, rely on many sequential applications of the unknown phase and therefore require long coherence times that remain unrealistic on near-term hardware. Here we target a more practical regime, in which the coherence time is limited but the processor instead provides many qubits that can be entangled and operated in parallel. We therefore keep the entanglement-assisted Hadamard test and make the reference adaptive, using GHZ amplification to trade for fewer shot counts.

The shot count has been used to quantify the realization cost of quantum algorithms~\cite{dorner2009optimal,ni2023low,wang2023quantum}. However, such a measure is unfair when comparing EHT with different sizes of $m$-qubit GHZ entangled states. The shot with large GHZ entangled states takes up more quantum systems, while the shot with small GHZ entangled states takes up fewer quantum systems. The available quantum systems is also a critical resource in quantum computing, but the shot count does not take this into account. In fact, on a large quantum processor, many shots can be executed in parallel within a single experimental cycle. For a quantum device, the truly time-consuming operations are loading, rearranging, and initializing, which is particularly pronounced on neutral-atom platforms~\cite{bluvstein2024logical, henriet2020quantum,saffman2016quantum, ebadi2021quantum}. Therefore, we adopt the \textit{device-restart count} to quantify the realization cost, instead of the shot count. Under this measure, the advantage of the adaptive scheme becomes clear. A fixed-amplification EHT must repeat the same circuit until the required precision is reached, so its restart count grows as $\mathcal{O}(1/\epsilon^2)$. In contrast, the AEHT progressively enlarges the GHZ state as the reference improves, keeping the per-round shot budget roughly constant; its restart count therefore grows only as $\mathcal{O}(1/\epsilon)$, up to logarithmic factors, and the ratio to the fixed-amplification cost vanishes as $\epsilon\to0$. When the GHZ size is eventually limited by the available number of qubits, the adaptive scheme saturates and the scaling returns to $\mathcal{O}(1/\epsilon^2)$, but it still retains a constant-factor advantage over a fixed-amplification method. The numerical experiments are conducted to demonstrate the advantage of the proposed AEHT over the conventional EHT in terms of device-restart count. This work makes entanglement-assisted phase estimation practical in the high-precision regime by adapting the reference to the estimate and by measuring cost in terms of device-restart counts rather than shot counts.

The paper is organized as follows. In Sec.~\ref{sec:Hadamard-test-based Phase Estimation} we review the standard and entanglement-assisted Hadamard tests and their shot count requirements. Section~\ref{sec:Adaptive entanglement-assisted Hadamard test} introduces the adaptive entanglement-assisted Hadamard test for exact eigenstates and analyzes how the reference can be refined to permit larger amplification. Section~\ref{sec:imperfect} extends the AEHT to imperfect state preparation scenario. Section~\ref{sec:comarison} introduces the device-restart count metric, compares fixed-amplification and adaptive schemes, and discusses the advantage as well as the effect of hardware-limited amplification. Section~\ref{sec:numerical_experiments} presents numerical experiments for exact and imperfect eigenstates, including a circuit-noise robustness study. We conclude and discuss open directions in Sec.~\ref{sec:conclusion}.


\section{Hadamard-test-based Phase Estimation}\label{sec:Hadamard-test-based Phase Estimation}
Estimating the quantum phase associated with the eigenvalue of a unitary operator $U$ is a fundamental task in quantum algorithms~\cite{kitaev1995quantum, nielsen2010quantum}. Given a unitary $U$ and one of its eigenstates $\ket{\phi}$ satisfying $U\ket{\phi} = e^{i\theta}\ket{\phi}$, the goal of phase estimation is to estimate the eigenphase $\theta$ to some desired precision. Conventional QPE algorithm, as demonstrated in Appendix~\ref{app:quantum_phase_estimation}, requires deep quantum circuits, which in turn demand long coherence times from the quantum hardware. However, near-term quantum hardware cannot meet this requirement.
Here we introduce two commonly used alternatives to estimate quantum phase: the standard Hadamard test and the entanglement-assisted Hadamard test. 
We also establish explicit finite-sample shot-count bounds that account for reference-phase uncertainty. These bounds quantify how a sufficient shot count depends jointly on the reference-phase uncertainty, amplification, target accuracy, and failure probability.

\subsection{Phase estimation with standard Hadamard test}\label{sec:phase_estimation_with_HT}
We first introduce the standard Hadamard test to estimate the eigenphase $\theta$. The quantum circuit for phase estimation with the standard Hadamard test is shown in Fig.~\ref{fig:hadamard-test}. At the beginning, we give a reference phase $\vtheta$, which is a guess of the true phase. Then, we prepare one ancilla in $\ket{0}$, apply a Hadamard gate, and controlled-$U$ to one copy of the eigenstate $\ket{\phi}$ of unitary $U$. Via phase kickback, the ancilla state becomes $\frac{1}{\sqrt{2}}(\ket{0} + e^{i\theta}\ket{1})$. Then, apply the single-qubit reference phase gate $\Phi(\vtheta)$, which maps $\ket{0}\rightarrow\ket{0}$ and $\ket{1}\rightarrow-\ri e^{-\ri \vtheta}\ket{1}$. Apply the final Hadamard gate and measure the ancilla. The measurement is denoted as $X\in\{-1,1\}$. With $N$ independent shots, the average value is $\hS:=\frac{1}{N}\sum_{j=1}^{N}X^{(j)}$. 

\begin{figure}[h]
    \centering
    \includegraphics[width=0.8\linewidth]{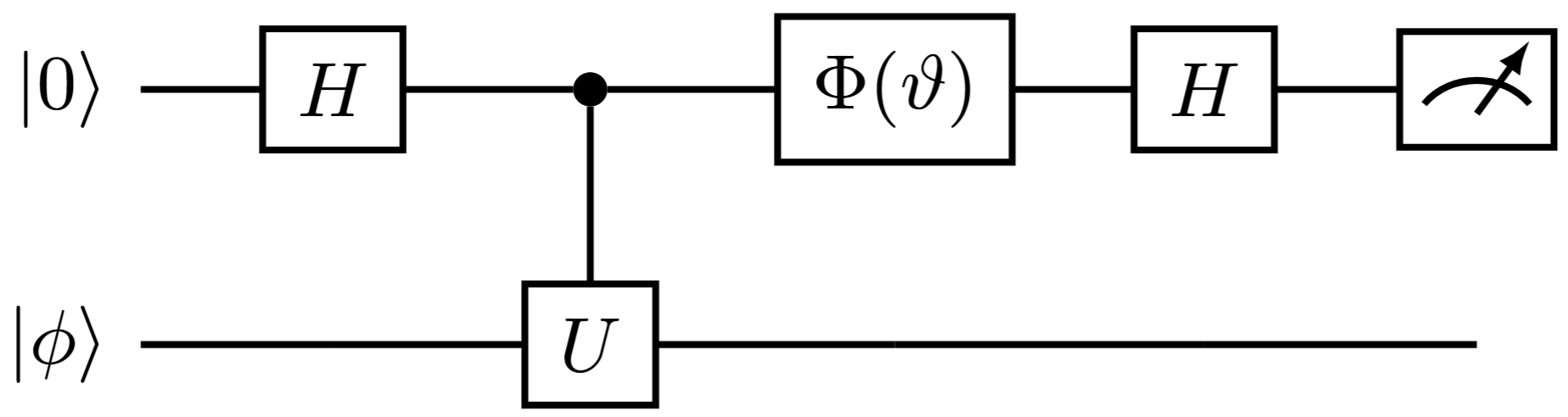}
    \caption{Quantum circuit of standard Hadamard test with reference.}
    \label{fig:hadamard-test}
\end{figure}

The expectation value of the measurement results is $\mathbb{E}[\hat{S}] = \sin(\theta-\vtheta)$, which is derived in Appendix~\ref{app:hadamard_expectation_values}.  Therefore $\hat{S}$ estimates $\sin(\theta-\vtheta)$. The quantum phase can be directly estimated by
\begin{equation}\label{eq:estimate_phase_std}
    \htheta
    =
    \vtheta+\arcsin (\hS).
\end{equation}
Note that $\arcsin (\hS)$ might not be unique. To avoid the ambiguity, one has to guarantee the reference phase is close to the true phase, i.e., $|\theta - \vtheta|<\pi/2$. In reality, the value $|\theta-\vtheta|$ and the true phase $\theta$ are unknown. But one can conduct a few measurements to get a coarse estimation, and then set the estimated coarse phase as the reference. In this case, one can assume $|\theta-\vtheta|\le \Delta$, where $\Delta<\pi/2$ is some constant.

According to Eq.~\eqref{eq:estimate_phase_std}, the phase error propagated from the sine-estimation error is approximately
$$\delta_{\hat{\theta}} \approx \frac{\delta_{\hat{S}}}{\sqrt{1-\hat{S}^2}}\lesssim\frac{\delta_{\hat{S}}}{\sqrt{1-\sin^2(\Delta)}},$$
where $(1-\hat{S}^2)^{-1/2}$ is the derivative of $\arcsin(\hat{S})$. If $\Delta$ approaches $\pm \pi/2$, the upper bound of the estimation error could go to infinity. In this case, to achieve the required phase estimation accuracy $\delta_{\hat{\theta}} \le \epsilon$, the estimation error $\delta_{\hat{S}}$ is supposed to be small enough, and the shot count is supposed to be sufficiently large, which is unacceptable in practice. Therefore, to guarantee the efficiency of phase estimation, the upper bound $\Delta$ is supposed to be sufficiently small. The analysis of the shot count is given in Lemma~\ref{lem:standard_hadamard_shot_count}.

\begin{lemma}[Shot count for the SHT]
    \label{lem:standard_hadamard_shot_count}
    Let $\ket{\phi}$ be an eigenstate of a unitary $U$ with $U\ket{\phi}=e^{\ri\theta}\ket{\phi}$, and let $\vtheta$ be a reference phase satisfying $|\theta-\vtheta|\le\Delta$ for some constant $\Delta<\pi/2$. To estimate $\theta$ via the standard Hadamard test within additive error $\epsilon$ and with success probability at least $1-p_{\fail}$, it suffices to take a shot count
    \begin{equation*}
        N_{\std}
        =
        \left\lceil\frac{2}{r_{\std}^2}\,\ln\frac{2}{p_{\fail}}\right\rceil,
    \end{equation*}
    where $r_{\rm std} = \epsilon\left(\sqrt{\cos^2\Delta+\epsilon^2}
            -\epsilon\sin\Delta\right)/(1+\epsilon^2)$.
\end{lemma}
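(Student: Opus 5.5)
The plan has two parts: a Hoeffding bound on $\hS$, and a deterministic perturbation bound on $\arcsin$ that turns a sine-error of $r_{\std}$ into a phase-error of $\epsilon$.

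\textbf{Concentration step.} The shots $X^{(j)}\in\{-1,1\}$ are i.i.d.\ with mean $\sin(\theta-\vtheta)$ (Appendix~\ref{app:hadamard_expectation_values}). Hoeffding's inequality for variables of range $2$ gives
\begin{equation*}
\Pr\bigl[|\hS-\sin(\theta-\vtheta)|\ge r\bigr]\le 2\exp(-Nr^2/2).
\end{equation*}
With $N\ge \frac{2}{r^2}\ln\frac{2}{p_{\fail}}$ this is at most $p_{\fail}$. Taking $r=r_{\std}$ yields exactly $N_{\std}$.

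\textbf{Deterministic step.} It remains to show that $|\hS-\sin\delta|\le r_{\std}$, with $\delta:=\theta-\vtheta$ and $|\delta|\le\Delta$, implies $|\arcsin(\hS)-\delta|\le\epsilon$. (First clip $\hS$ to $[-1,1]$; clipping only moves it closer to $\sin\delta$.) By monotonicity of $\sin$ on $[-\pi/2,\pi/2]$, it suffices to show
\begin{equation*}
\sin(\delta+\epsilon)-\sin\delta\ge r_{\std},\qquad \sin\delta-\sin(\delta-\epsilon)\ge r_{\std},
\end{equation*}
whenever $\delta\pm\epsilon$ stays in range; if it leaves the range, that side is automatic after clipping. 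Each gap is an integral of $\cos$ over an interval of length $\epsilon$. The worst case is $\delta=\pm\Delta$, stepping outward, where $\cos$ is smallest. One must also handle the regime $\Delta+\epsilon>\pi/2$ separately, which is where the clipping is needed.

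\textbf{Main obstacle: matching the exact form of $r_{\std}$.} The expression $\epsilon(\sqrt{\cos^2\Delta+\epsilon^2}-\epsilon\sin\Delta)/(1+\epsilon^2)$ suggests the authors used a tangent-type linearization rather than the exact sine gap. Indeed, $r=r_{\std}$ is the positive root of $(1+\epsilon^2)r^2+2\epsilon^2 r\sin\Delta-\epsilon^2\cos^2\Delta=0$, i.e.
\begin{equation*}
\frac{r}{\sqrt{\cos^2\Delta-2r\sin\Delta-r^2}}=\epsilon .
\end{equation*}
This is plausibly a bound of the form $|\arcsin(s+r)-\arcsin s|\le r/\sqrt{1-(|s|+r)^2}$ with $|s|\le\sin\Delta$: the mean-value theorem with the derivative evaluated at the far endpoint, followed by the crude bound $1-(\sin\Delta+r)^2\ge\cos^2\Delta-2r\sin\Delta-r^2$.

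So I would first try the MVT route. Write
\begin{equation*}
|\arcsin\hS-\arcsin\sin\delta|\le\frac{|\hS-\sin\delta|}{\sqrt{1-\xi^2}},
\end{equation*}
where $\xi$ lies between $\hS$ and $\sin\delta$, so $|\xi|\le\sin\Delta+r$. Then bound the right-hand side by $r/\sqrt{\cos^2\Delta-2r\sin\Delta-r^2}=\epsilon$ using the defining quadratic. One must check that $\sin\Delta+r<1$ so the derivative is finite; this follows since $r\le\epsilon\cos\Delta/(1+\epsilon^2)$ is small relative to $1-\sin\Delta$. If the algebra reproduces the stated formula, I would present that route. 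Otherwise I would fall back on the exact-gap argument and note that the stated $r_{\std}$ is no larger than the exact threshold.
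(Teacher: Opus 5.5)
Your MVT route is the paper's argument: Hoeffding, plus the $(1-s^2)^{-1/2}$-Lipschitz bound for $\arcsin$ on $[-s,s]$ with $s=\sin\Delta+r$ and $r=\epsilon\sqrt{1-s^2}$, and the algebra does close because your ``crude bound'' $1-(\sin\Delta+r)^2\ge\cos^2\Delta-2r\sin\Delta-r^2$ is in fact an identity, so no fallback is needed. One small fix: justify $\sin\Delta+r<1$ directly from your quadratic, which gives $\epsilon^2\bigl(1-(\sin\Delta+r)^2\bigr)=r^2>0$, not from $r\le\epsilon\cos\Delta/(1+\epsilon^2)$, which is false (at $\Delta=0$, $r_{\std}=\epsilon/\sqrt{1+\epsilon^2}$).
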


The proof of Lemma~\ref{lem:standard_hadamard_shot_count} is shown in Appendix~\ref{app:proof_of_std_hadamard}. The value $r_{\rm std}$ can be written as the Taylor expansion form, i.e., $r_{\rm std} = \epsilon\cos\Delta + \cO(\epsilon^2)$. Since the target accuracy is generally tiny, the higher-order terms $\cO(\epsilon^2)$ are negligible. Thus, the shot count complexity of phase estimation with standard Hadamard test is $\cO(1/\epsilon^2)$ for target accuracy $\epsilon$.

\subsection{Phase estimation with entanglement-assisted Hadamard test}\label{sec:Phase estimation with entanglement-assisted Hadamard test}

The standard Hadamard test requires $\cO(1/\epsilon^2)$ shots, which is expensive on near-term devices. One way to reduce this shot count is to use quantum entanglement~\cite{giovannetti2011advances, giovannetti2004quantum, campos2003optical, nielsen2023deterministic}.  Specifically, prepare an $m$-qubit Greenberger-Horne-Zeilinger (GHZ) state as ancilla system
\begin{equation*}
    \ket{\GHZ_m}
    :=\frac{1}{\sqrt{2}}\left(\ket{0}^{\otimes m}+\ket{1}^{\otimes m}\right).
\end{equation*}
Then apply controlled-$U$ in parallel to the $m$-copies of the eigenstate $\ket{\phi}$ of unitary $U$. Afterwards, the phase block $\Phi_m(\vtheta)$ with reference phase $\vtheta$ acts on the GHZ ancilla systems, which maps $\ket{0}^{\otimes m}\rightarrow\ket{0}^{\otimes m}$ and $\ket{1}^{\otimes m}\rightarrow-\ri e^{-\ri m\vtheta}\ket{1}^{\otimes m}$. After the final Hadamards, measure the ancilla qubits in the computational basis and denote the outcome as $z\in\{0,1\}^m$, and the Hamming weight of the measurements is defined as $|z|:=\sum_{j=1}^m z_j$. The quantum circuit for the phase estimation with entanglement-assisted Hadamard test is shown in Fig.~\ref{fig:ghz-hadamard-general}.

\begin{figure}[h]
    \centering
    \includegraphics[width=\linewidth]{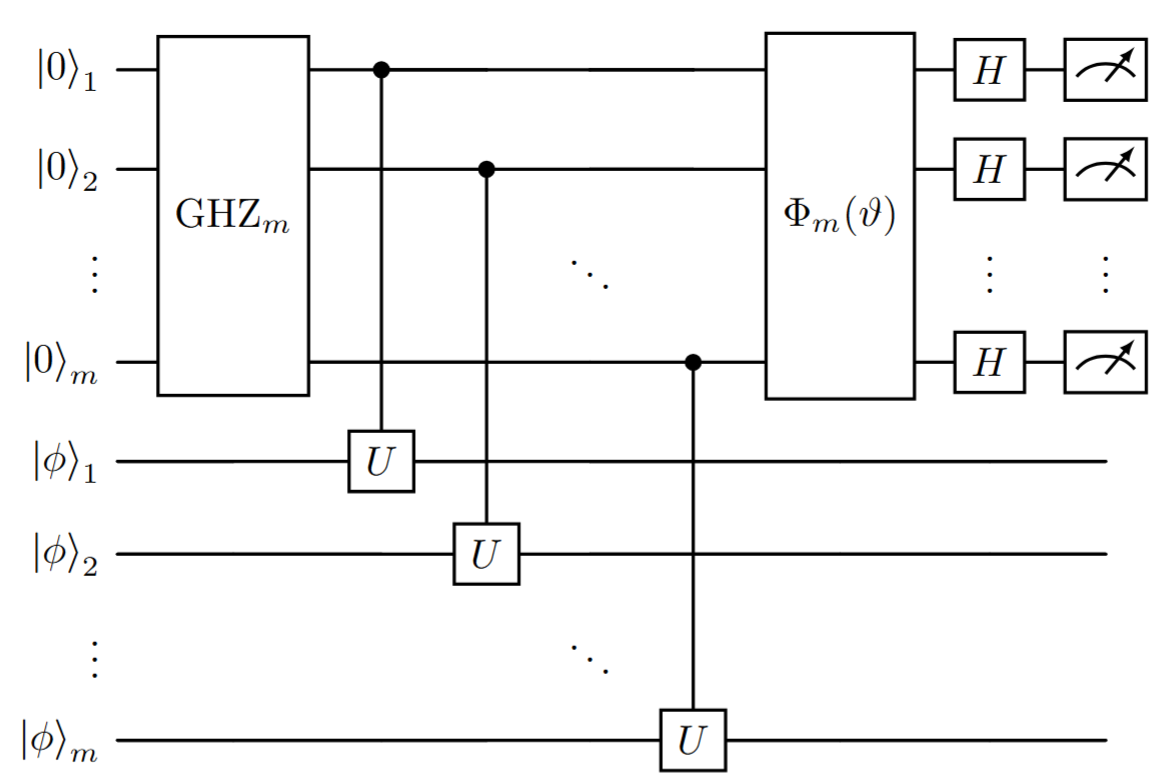}
    \caption{Quantum circuit of entanglement-assisted Hadamard test with $m$-qubit GHZ state.}
    \label{fig:ghz-hadamard-general}
\end{figure}

Denote $Y := (-1)^{|z|}\in\{+1,-1\}$, and the corresponding average value over $N$ independent shots is $\hat{S} = \frac{1}{N}\sum_{j=1}^N Y^{(j)}$. The expectation value of the measurement results is $\mathbb{E}[Y] = \sin(m(\theta-\vtheta))$, which is derived in Appendix~\ref{app:hadamard_expectation_values}. Therefore, $\hat{S}$ estimates $\sin(m(\theta-\vtheta))$. Note that the phase signal is amplified by $m$ times, thus the quantity $m$ is called the \textit{amplification}. The quantum phase can be directly estimated by
\begin{equation}\label{eq:phase_estimation_ent}
    \hat{\theta}:=\vtheta+\frac{1}{m}\arcsin (\hS).
\end{equation}
Due to the ambiguity of $\arcsin (\hat S)$, similar to the standard Hadamard test, we require the residual phase to lie on an unambiguous branch, i.e., $|\theta-\vartheta|\le\Delta$ with $m\Delta<\pi/2$.

The phase error propagated from the sine-estimation error is approximately
\begin{equation}\label{eq:paraller_error}
    \delta_{\hat{\theta}} \approx \frac{\delta_{\hat{S}}}{m\sqrt{1-\hat{S}^2}}\lesssim \frac{\delta_{\hat{S}}}{m\sqrt{1-\sin^2(m\Delta)}},
\end{equation}
where $(1-\hat{S}^2)^{-1/2}$ is the derivative of $\arcsin(\hat{S})$. When $m\Delta$ approaches to $\pm \pi/2$, the upper bound of the estimation error could go to infinity. To address this problem, the choice of amplification $m$ should be constrained by $m\le \lfloor\chi/\Delta\rfloor$, where $\chi<\pi/2$ is a reasonable constant called \textit{branch margin}. The shot count to obtain the phase within the required accuracy is given in Lemma~\ref{lem:parallel_hadamard_shot_count}.

\begin{lemma}[Shot count for the EHT]
    \label{lem:parallel_hadamard_shot_count}
    Let $\ket{\phi}$ be an eigenstate of a unitary $U$ with $U\ket{\phi}=e^{\ri\theta}\ket{\phi}$, and let $\vtheta$ be a reference phase satisfying $|\theta-\vtheta|\le \Delta$ for some constant $\Delta<\frac{\pi}{2m}$. Fix a branch margin $\chi<\pi/2$ and choose an integer amplification $m$ such that $m\Delta\le\chi$. To estimate the phase $\theta$ via the $m$-qubit GHZ state EHT within additive error $\epsilon$ and with success probability at least $1-p_{\fail}$, it suffices to take a shot count
    \begin{equation*}
        N_{\rm EHT}=\left\lceil\frac{2}{r_{\rm EHT}^2}\ln\frac{2}{p_{\fail}}\right\rceil,
    \end{equation*}
    where 
    \begin{align}
        r_{\rm EHT} = \frac{m\epsilon\left(\sqrt{\cos^2(m\Delta)+m^2\epsilon^2}-m\epsilon\sin(m\Delta)\right)}{1+m^2\epsilon^2}.
    \end{align}
\end{lemma}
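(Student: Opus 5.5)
The plan is to split the argument into a statistical step (Hoeffding) and a deterministic step (an inverse-Lipschitz bound for $\arcsin$). It mirrors the standard Hadamard test case of Lemma~\ref{lem:standard_hadamard_shot_count}, with the residual phase scaled by $m$.

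Write $x:=m(\theta-\vtheta)$, so that $|x|\le m\Delta\le\chi<\pi/2$ and $\mathbb{E}[Y]=\sin x$. Since the $Y^{(j)}\in\{-1,1\}$ are i.i.d., Hoeffding's inequality gives
\[
\Pr\bigl[|\hS-\sin x|\ge r\bigr]\le 2\exp(-Nr^2/2).
\]
With $N=N_{\rm EHT}$ and $r=r_{\rm EHT}$, the right-hand side is at most $p_{\fail}$. It therefore suffices to prove the deterministic claim: if $|\hS-\sin x|< r_{\rm EHT}$, then $|\hat\theta-\theta|\le\epsilon$. Here $\hS$ is clipped to $[-1,1]$ if needed, which only decreases the error. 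By Eq.~\eqref{eq:phase_estimation_ent}, $\hat\theta-\theta=\frac1m(\arcsin\hS-x)$, so the claim is equivalent to $|\arcsin\hS-x|\le m\epsilon=:\eta$.

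For the deterministic step, put $y=\arcsin\hS$. We need: $|\sin y-\sin x|<r$ with $r\le\min\{\sin(x+\eta)-\sin x,\ \sin x-\sin(x-\eta)\}$ forces $|y-x|\le\eta$. This follows from monotonicity of $\sin$ on $[-\pi/2,\pi/2]$. If $x+\eta\ge\pi/2$, the upper side is vacuous, since then $y\le\pi/2\le x+\eta$; similarly on the lower side.

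So the task reduces to bounding, uniformly over $|x|\le m\Delta$, the quantity
\[
\min_\pm|\sin(x\pm\eta)-\sin x|
\]
from below by $r_{\rm EHT}$. By concavity, the worst case should be at the endpoint $|x|=m\Delta$, moving outward, i.e.\ $\sin(m\Delta+\eta)-\sin(m\Delta)$. Then I would identify the closed form.

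The main obstacle is that closed form. I expect the authors linearized differently: they used $\tan$, or equivalently the bound $\sin(a+\eta)-\sin a\ge$ an expression in $t=\tan\eta$. Note that $m\epsilon/\sqrt{1+m^2\epsilon^2}=\sin(\arctan(m\epsilon))$. Writing $\beta=\arctan(m\epsilon)$ and $a=m\Delta$, the given $r_{\rm EHT}$ equals
\[
\sin\beta\left(\sqrt{\cos^2 a+\tan^2\beta}-\tan\beta\sin a\right)\cos\beta .
\]
I would verify, by solving for the $r$ with $\arcsin(\sin a+r)-a=\beta$-type equality (a quadratic in $r$ with root $r=\cos^2\beta\,(\tan\beta\sqrt{\cos^2a+\tan^2\beta}-\tan^2\beta\sin a)$), that $r_{\rm EHT}$ is exactly the root of the quadratic in $r$ obtained from requiring $\eta$-accuracy through a first-order/secant bound on $\arcsin$. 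The cleanest route is probably the condition
\[
|\hS-\sin x|\le r\ \Rightarrow\ |y-x|\le \frac{r}{\sqrt{1-(\sin a+r)^2}}.
\]
This uses the mean-value theorem with the maximal derivative of $\arcsin$ on the relevant interval. Setting the right side equal to $\eta$ gives
\[
r^2(1+\eta^2)+2\eta^2 r\sin a-\eta^2\cos^2a=0,
\]
whose positive root is exactly $r_{\rm EHT}$ with $\eta=m\epsilon$. I will carry out that algebra to confirm the match, and then conclude with the union of the two steps.
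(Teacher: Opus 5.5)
Your final route is correct and is essentially the paper's own proof. You apply Hoeffding to the $\pm1$ parities, then use the $(1-s^2)^{-1/2}$-Lipschitz bound for $\arcsin$ on $[-s,s]$ with $s=\sin(m\Delta)+r$; setting $r=m\epsilon\sqrt{1-s^2}$ is exactly the paper's defining equation $s-a=q\sqrt{1-s^2}$, your quadratic is right, and its positive root is indeed $r_{\rm EHT}$, so the monotonicity and $\arctan$ detours can simply be dropped.
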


The proof is shown in Appendix~\ref{app:proof_of_parallel_hadamard_shot_count}. The value $r_{\rm EHT}$ can be written as the Taylor expansion form $r_{\rm EHT} = m\epsilon\cos(m\Delta) + \bigO((m\epsilon)^2)$. Since the target accuracy $\epsilon$ is generally tiny, the high-order term $\bigO((m\epsilon)^2)$ is negligible. Therefore, the shot count of phase estimation with $m$-qubit GHZ state EHT is 
\begin{equation}\label{eq:N_EHT}
    N_{\rm EHT} \approx
    \frac{2}{m^2\epsilon^2\cos^2(m\Delta)}
    \ln\frac{2}{p_{\mathrm{fail}}}.
\end{equation}
Considering the amplification that is upper bounded by $m\le \lfloor\chi/\Delta\rfloor$, the natural choice of amplification is $m= \lfloor\chi/\Delta\rfloor$, because in the general case, the larger amplification $m$ implies fewer shot counts. Thus, we roughly have $m\Delta\approx \chi$, and the shot count complexity becomes $N_{\rm EHT}=\cO(1/(m\epsilon)^2)$. Compared with the SHT, the EHT reduces the required number of shots by a factor of $m^2$. If we ignore the constant $m$, then the shot count scales $N_{\rm EHT}=\cO(1/\epsilon^2)$.
The algorithm of phase estimation with EHT in pseudo-code is provided in Algorithm~\ref{alg:single_round_parallel_hadamard}. By setting the amplification $m=1$, the algorithm collapses to the standard Hadamard test as introduced in Sec.~\ref{sec:phase_estimation_with_HT}.

\begin{algorithm}[htbp]
\renewcommand{\algorithmicrequire}{\textbf{Input:}}
\renewcommand{\algorithmicensure}{\textbf{Output:}}
    \caption{Phase estimation with EHT}
    \label{alg:single_round_parallel_hadamard}
    \begin{algorithmic}[1]
        \REQUIRE Reference $\vtheta$, amplification $m$, reference bound $\Delta$, accuracy $\epsilon$, failure probability $p_{\fail}$.
        \ENSURE The estimated phase $\hat \theta$.
        \STATE $r\gets
            \frac{\left(\sqrt{\cos^2(m\Delta)+m^2\epsilon^2}-m\epsilon\sin(m\Delta)\right)m\epsilon}{1+m^2\epsilon^2} $
        \STATE $N\gets
            \left\lceil\dfrac{2}{r^2}
            \ln\dfrac{2}{p_{\fail}}\right\rceil$
        \STATE Run $N$ entangled Hadamard shots with $(m,\vtheta)$ and obtain $Y^{(1)},\dots,Y^{(N)}\in\{\pm1\}$
        \STATE $\hS\gets\dfrac{1}{N}\sum_{j=1}^{N}Y^{(j)}$
        \STATE Output the estimated phase $\htheta\gets\vtheta+\dfrac{1}{m}\arcsin(\hS)$.
    \end{algorithmic}
\end{algorithm}

\section{Adaptive entanglement-assisted Hadamard test}\label{sec:Adaptive entanglement-assisted Hadamard test}

Compared with the standard Hadamard test (Sec.~\ref{sec:phase_estimation_with_HT}), the shot count by entanglement-assisted Hadamard test is reduced by the factor of $m^2$, where $m$ is the amplification (Sec.~\ref{sec:Phase estimation with entanglement-assisted Hadamard test}). However, the feasible amplification is fundamentally limited by the accuracy of the reference phase, making the method inefficient in the high-precision regime. In this section, we propose an adaptive entanglement-assisted Hadamard test that iteratively refines the reference phase, enabling progressively larger amplification as the estimate improves.

Specifically, at the beginning, suppose we have a coarse reference phase $\vtheta_0$, whose distance to the true phase $\theta$ is upper bounded by $|\theta-\vtheta_0|\le \Delta_0$, and the initial amplification is chosen as $m_0 = \lfloor \frac{\chi}{\Delta_0} \rfloor$ with a fixed branch margin $\chi<\pi/2$ and $m_0\Delta_0\le\chi$. Set the target accuracy as $\epsilon_0 = \gamma\Delta_0$, where $\gamma\in(0,1)$ is the contraction factor. Then the estimated quantum phase $\hat{\theta}_0$ can be achieved by conducting Algorithm~\ref{alg:single_round_parallel_hadamard}. In the following $t$-th round, take the estimated phase in the previous round as the new reference phase, i.e., $\vtheta_t:=\hat{\theta}_{t-1}$, and the corresponding reference bound is the target accuracy in previous round $\Delta_t=\epsilon_{t-1}=\gamma^{t}\Delta_0$. Set the amplification as $m_t=\lfloor \frac{\chi}{\Delta_t}\rfloor$, set the target accuracy as $\epsilon_t=\gamma\epsilon_{t-1}$. Conduct the Algorithm~\ref{alg:single_round_parallel_hadamard} to obtain the estimated phase $\hat{\theta}_t$ of the $t$-th round.

The estimated quantum phase of the final round $\hat{\theta}_{T-1}$ is supposed to be the same as the true phase up to a precision $\epsilon$, i.e., $|\hat{\theta}_{T-1}-\theta|\le \gamma^T\Delta_0\le \epsilon$. Therefore, the total number of round $T$ should be
\begin{equation}\label{eq:total_rounds_T}
    T=\left\lceil\log_{1/\gamma}\frac{\Delta_0}{\epsilon}\right\rceil.
\end{equation}
For a total successful probability $1-p_\fail$, assuming the per-round failed probabilities $p_t$ are equal, we set $p_{t}=1-(1-p_\fail)^{1/T}$. The phase estimation with iterative entanglement-assisted Hadamard test in pseudo-code is summarized in Algorithm~\ref{alg:iterative_exact_eigenstate}.

\begin{algorithm}[htbp]
\renewcommand{\algorithmicrequire}{\textbf{Input:}}
\renewcommand{\algorithmicensure}{\textbf{Output:}}
    \caption{Phase estimation with AEHT}
    \label{alg:iterative_exact_eigenstate}
    \begin{algorithmic}[1]
        \REQUIRE Initial reference $\vtheta_0$, initial reference bound $\Delta_0$, target accuracy $\epsilon$, contraction factor $\gamma$, branch margin $\chi$, total failure probability $p_{\fail}$.
        \ENSURE The estimated phase $\hat{\theta}$.
        \STATE $T\gets\left\lceil\log_{1/\gamma}(\Delta_0/\epsilon)\right\rceil$
        \STATE $p_{\mathrm{round}}\gets 1-(1-p_{\fail})^{1/T}$
        \FOR{$t=0$ to $T-1$}
            \STATE The reference bound of $t$-th round $\Delta_t\gets \gamma^{t}\Delta_0$.
            \STATE The amplification $m_t\gets \left\lfloor \frac{\chi}{\Delta_t}\right\rfloor$.
            \STATE The target phase estimation accuracy $\epsilon_t \gets\gamma^{t+1}\Delta_0$.
            \STATE Conduct Algorithm~\ref{alg:single_round_parallel_hadamard} by taking reference phase $\vtheta_t$, amplification $m_t$, reference bound $\Delta_t$, accuracy $\epsilon_t$, and failure probability $p_{\rm round}$ as inputs, and obtain the quantum phase $\hat{\theta}_t$.
            \STATE The reference for the next round is $\vtheta_{t+1}\gets \hat{\theta}_t$.
        \ENDFOR
        \STATE Output the estimated phase $\htheta\gets\htheta_{T-1}$.
    \end{algorithmic}
\end{algorithm}

By Lemma~\ref{lem:parallel_hadamard_shot_count}, it suffices to choose the shot count of the $t$-th round as
\begin{equation}
    N_t
    =
    \left\lceil
    \frac{2}{r_t^2}
    \ln\frac{2}{p_{\mathrm{round}}}
    \right\rceil,
\end{equation}
where
\begin{align*}
    r_t :=\frac{m_t\epsilon_t \left(\sqrt{\cos^2(m_t\Delta_t)+(m_t\epsilon_t)^2}-m_t\epsilon_t\sin(m_t\Delta_t)\right)}{1+(m_t\epsilon_t)^2}.
\end{align*}
Multiply both numerator and denominator by $\sqrt{\cos^2(m_t\Delta_t)+(m_t\epsilon_t)^2}+m_t\epsilon_t\sin(m_t\Delta_t)$, this gives 
\begin{equation}
    r_t = \frac{m_t\epsilon_t\cos^2(m_t\Delta_t)}
    {\sqrt{\cos^2(m_t\Delta_t)+(m_t\epsilon_t)^2}
    +m_t\epsilon_t\sin(m_t\Delta_t)}.
\end{equation}
Using $m_t\Delta_t\le\chi$ and $m_t\epsilon_t\le\gamma\chi$, we obtain the lower bound
\begin{equation}
\label{eq:r_t_lower_bound}
    r_t
    \ge
    \frac{\gamma m_t\Delta_t\cos^2\chi}{C_{\chi,\gamma}},
\end{equation}
where $C_{\chi,\gamma} := \sqrt{1+\gamma^2\chi^2} + \gamma\chi\sin\chi$. Consequently, the shot count of the $t$-th round satisfies
\begin{equation}
    N_t \le 1 + \frac{2C_{\chi,\gamma}^2}{\gamma^2 m_t^2\Delta_t^2\cos^4\chi} \ln\frac{2}{p_{\mathrm{round}}},
\end{equation}
and summing over all $T$ rounds gives
\begin{equation}
\label{eq:adaptive_total_shot_count}
    N_{\mathrm{AEHT}} = \sum_{t=0}^{T-1}N_t 
    \le T + \frac{2C_{\chi,\gamma}^2}{\gamma^2\cos^4\chi} \ln\frac{2}{p_{\mathrm{round}}} \sum_{t=0}^{T-1}\frac{1}{m_t^2\Delta_t^2}.
\end{equation}
For the optimal choice $m_t=\lfloor\chi/\Delta_t\rfloor\approx\chi/\Delta_t$, we have $m_t\Delta_t\approx\chi$, so the per-round shot count is approximately constant:
\begin{equation}
\label{eq:adaptive_round_shot_count}
    N_t \approx \frac{2C_{\chi,\gamma}^2}{\gamma^2\chi^2\cos^4\chi} \ln\frac{2}{p_{\mathrm{round}}}.
\end{equation}
Using $p_{\mathrm{round}}=1-(1-p_{\mathrm{fail}})^{1/T}\approx p_{\mathrm{fail}}/T$, the total shot count becomes
\begin{equation}
    N_{\mathrm{AEHT}} \approx T +   \frac{2C_{\chi,\gamma}^2 T}{\gamma^2\chi^2\cos^4\chi} \ln\frac{2T}{p_{\mathrm{fail}}}.
\end{equation}
By Eq.~\eqref{eq:total_rounds_T}, $T=\mathcal{O}\left(\log\frac{1}{\epsilon}\right)$. Thus $N_{\mathrm{AEHT}}=\mathcal{O}\left(\log\frac1\epsilon\right)$, up to doubly logarithmic factors.

\section{Phase estimation with imperfect state preparation}
\label{sec:imperfect}
In this section, we consider a more practical setting where the prepared eigenstates are imperfect. Let $\ket{\phi_{\mathrm{tar}}}$ be the eigenstate of $U$, with eigenphase $\theta_{\mathrm{tar}}$, and let $\ket{\psi}$ denote the state actually prepared on the quantum device. In general, $\ket{\psi}$ contains components from eigenstates other than $\ket{\phi_{\mathrm{tar}}}$ and is therefore not an exact eigenstate of $U$. Consequently, the measured overlap takes the form
\begin{equation}
\label{eq:effective_overlap}
    \bra{\psi}U\ket{\psi}
    =
    \rho e^{i\theta_{\psi}},
\end{equation}
where $\theta_{\psi}$ is the effective phase associated with the prepared state and $\rho = \lvert\bra{\psi}U\ket{\psi}\rvert \in [0,1]$ is the amplitude, which quantifies the attenuation of the Hadamard-test signal.

We decompose the total error relative to the target eigenphase into the preparation error $\lvert\theta_{\psi}-\theta_{\mathrm{tar}}\rvert$ and the algorithmic error $|\hat{\theta}_\psi -\theta_{\psi}|$ in estimating $\theta_{\psi}$, where $\hat{\theta}_\psi$ refers to the estimated phase. The latter further consists of the \textit{amplitude-induced bias} $\varepsilon_\bias$ and the finite-shot \textit{statistical error} $\varepsilon_\stat$. The preparation error is determined by the quality of $\ket{\psi}$ and cannot be reduced by increasing the shot count or by adaptive rounds. The algorithmic error, by contrast, can be systematically controlled. We first derive bounds on the preparation error and the amplitude-induced bias, and then develop an adaptive entanglement-assisted protocol that repeatedly updates the reference phase and jointly optimizes the amplification level and shot count. The resulting procedure suppresses the amplitude-induced bias round by round and estimates the effective phase $\theta_{\psi}$ to a prescribed accuracy with a success probability.

\subsection{Error analysis}
\label{sec:decomposition}

We now quantify the different error sources introduced above. We first relate the effective phase $\theta_{\psi}$ to the target eigenphase $\theta_{\mathrm{tar}}$ through the infidelity of the prepared state, thereby obtaining an upper bound on the preparation error. We then analyze the EHT signal generated by $\ket{\psi}$ and decompose the algorithmic error in estimating $\theta_{\psi}$ into a finite-shot statistical error and an amplitude-induced bias. This decomposition distinguishes the error intrinsic to imperfect state preparation itself from the errors introduced by the phase-estimation procedure, and paves the path for the adaptive protocol developed in the next subsection.

\subsubsection{Preparation error}
The imperfectly prepared state $\ket{\psi}$ can be decomposed as
\begin{equation}
    \ket{\psi}
    =
    \sqrt{1-\eta}\ket{\phi_{\mathrm{tar}}}
    +
    \sqrt{\eta}\ket{\phi_{\mathrm{tar}}^{\perp}},
\end{equation}
where $\ket{\phi_{\mathrm{tar}}}$ is the target eigenstate and $\ket{\phi_{\mathrm{tar}}^{\perp}}$ is the state perpendicular to it, i.e., $\braket{\phi_{\mathrm{tar}}}{\phi_{\mathrm{tar}}^{\perp}}=0$; here $\eta=1-|\braket{\phi_{\mathrm{tar}}}{\psi}|^2$ is the state-preparation infidelity.

When the state is prepared imperfectly, the measurement does not access the target phase $\theta_{\mathrm{tar}}$ directly; instead, it accesses the effective phase $\theta_\psi$ defined in Eq.~\eqref{eq:effective_overlap}. The gap $|\theta_{\mathrm{tar}}-\theta_\psi|$ is called the \textit{preparation error}. When the infidelity $\eta<1/2$, this error is upper bounded by
\begin{equation*}
    |\theta_{\mathrm{tar}}-\theta_\psi|
    \le
    \arcsin\!\left(\frac{\eta}{1-\eta}\right)
    =
    \eta+\bigO(\eta^2).
\end{equation*}
The derivation of the upper bound is shown in Appendix~\ref{app:upper_bound_of_preparation_error}.

\subsubsection{Algorithmic error}
When the same entanglement-assisted Hadamard test is applied to an imperfectly prepared state $\ket{\psi}$, each shot produces a parity outcome $Y\in\{-1,1\}$. For $N$ independent shots, the sample mean $\hat{S}=\frac{1}{N}\sum_{j=1}^{N}Y^{(j)}$ has expectation
\begin{equation*}
    \mathbb{E}[\hat{S}]
    =\rho^{m}\sin\!\big(m(\theta_\psi-\vtheta)\big),
\end{equation*}
as derived in Appendix~\ref{app:proof_of_parallel_hadamard_shot_count_imperfect}. Compared with the exact-eigenstate case, the signal contains an additional attenuation factor $\rho^m$, and the phase is replaced by the effective phase $\theta_\psi$. If the exact amplitude $\rho$ is known, the effective phase can be estimated by
\begin{equation}
\label{eq:phase_estimation_imperfect}
    \hat{\theta}_{\psi,\rho}
    =\vtheta+\frac{1}{m}\arcsin\!\left(\frac{\hat{S}}{\rho^{m}}\right),
\end{equation}
provided that $|\hat{S}|\le\rho^m$ and the residual phase satisfies $m|\theta_\psi-\vtheta|<\pi/2$. If the state is prepared perfectly, i.e., $\rho=1$ and $\theta_\psi=\theta_{\mathrm{tar}}$, then the estimator reduces to the exact-eigenstate case of Sec.~\ref{sec:Phase estimation with entanglement-assisted Hadamard test}.

In practice, the exact amplitude $\rho$ is generally unknown, so Eq.~\eqref{eq:phase_estimation_imperfect} cannot be applied directly. We instead use the amplitude-unaware estimator
\begin{equation} \label{eq:amplitude_unaware_estimator}
    \hat{\theta}_{\psi}
    =\vtheta+\frac{1}{m}\arcsin(\hat{S}),
\end{equation}
which treats the measured signal as if it had unit amplitude. Ignoring the attenuation factor $\rho^m$ introduces a systematic bias. To distinguish this bias from finite-shot fluctuations, define the infinite-shot limit of the estimator as
\begin{equation*}
    \theta_{\psi,\infty}
    :=\vtheta+\frac{1}{m}\arcsin\!\big(\mathbb{E}[\hat{S}]\big),
\end{equation*}
where $\mathbb{E}[\hat{S}]=\rho^m\sin\!\big(m(\theta_\psi-\vtheta)\big)$. The triangle inequality then bounds the algorithmic error by
\begin{equation}
    |\hat{\theta}_\psi-\theta_\psi|
    \le
    \underbrace{|\hat{\theta}_\psi-\theta_{\psi,\infty}|}
        _{\text{statistical error }\varepsilon_\stat}
    +
    \underbrace{|\theta_{\psi,\infty}-\theta_\psi|}
        _{\text{amplitude-induced bias }\varepsilon_\bias}.
\end{equation}
The statistical error $\veps_\stat$ measures the fluctuation of the finite-shot estimate around its infinite-shot limit, whereas the amplitude-induced bias $\veps_\bias$ measures the deviation of that limit from the effective phase.

Unlike statistical error $\varepsilon_\stat$, this amplitude bias $\varepsilon_\bias$ persists even if the shot count $N\to + \infty$. However, if we suppose the lower bound of amplitude $\rho_0\le\rho$ is available, the corresponding bias can be guaranteed to be upper bounded as shown in Theorem~\ref{thm:single_round_imperfect}.

\begin{theorem}[Amplitude bias error bound]
\label{thm:single_round_imperfect}
Let $U$ be a unitary and let $\ket{\psi}$ be a prepared state with $\langle\psi|U|\psi\rangle=\rho e^{i\theta_\psi}$, where $\theta_\psi$ is the effective phase and $\rho=\lvert\langle\psi|U|\psi\rangle\rvert$ is the signal amplitude. Assume that $\rho$ is lower bounded by a known constant $\rho_0\in[0,1]$, i.e., $\rho\ge\rho_0$. Let $\vtheta$ be a reference phase satisfying $|\theta_\psi-\vtheta|\le\Delta$ for a reference-error bound $\Delta>0$, and choose an integer amplification $m$ and a branch margin $\chi\in(0,\pi/2)$ such that $m\Delta\le\chi$. Then the amplitude-induced bias $\varepsilon_{\mathrm{bias}}$ satisfies
\begin{equation}
    |\varepsilon_{\mathrm{bias}}|
    \le
    \frac{1-\rho_0^{m}}{m}\tan(m\Delta).
\end{equation}
\end{theorem}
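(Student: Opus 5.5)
We reduce the bias to a one-variable inequality for $\arcsin$. Write $x:=m(\theta_\psi-\vtheta)$, so that $|x|\le m\Delta\le\chi<\pi/2$, and set $a:=\rho^m\in[\rho_0^m,1]$. By definition of the infinite-shot limit,
\begin{equation*}
    \varepsilon_{\mathrm{bias}}=\theta_{\psi,\infty}-\theta_\psi=\frac{1}{m}\bigl(\arcsin(a\sin x)-x\bigr).
\end{equation*}
The claim therefore reduces to the elementary inequality
\begin{equation*}
    \bigl|x-\arcsin(a\sin x)\bigr|\le (1-a)\tan|x|.
\end{equation*}
Once this holds, the theorem follows from two monotonicity facts: $1-a\le 1-\rho_0^m$ because $\rho\ge\rho_0$, and $\tan|x|\le\tan(m\Delta)$ because $\tan$ is increasing on $[0,\pi/2)$. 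Dividing by $m$ then gives the stated bound.

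Both sides of the elementary inequality are odd in $x$, so we may assume $x\in[0,\chi]$. The quantity $x-\arcsin(a\sin x)$ is then nonnegative, since $a\sin x\le\sin x$ and $\arcsin$ is increasing.

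To prove the inequality, we fix $x$ and view $g(b):=\arcsin(b\sin x)$ as a function of $b\in[a,1]$. By the mean value theorem,
\begin{equation*}
    x-\arcsin(a\sin x)=g(1)-g(a)=(1-a)\,\frac{\sin x}{\sqrt{1-\xi^2\sin^2x}}
\end{equation*}
for some $\xi\in[a,1]$. The denominator is smallest when $\xi=1$, which gives
\begin{equation*}
    \frac{\sin x}{\sqrt{1-\xi^2\sin^2x}}\le\frac{\sin x}{\cos x}=\tan x.
\end{equation*}
This is the required bound. Equivalently, one can argue by convexity: $g$ is convex in $b$ on $[0,1]$ because $\arcsin$ is convex on $[0,1)$, and $g'(1)=\tan x$, so the chord slope between $a$ and $1$ is at most $\tan x$.

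The only step needing care is the domain. We need $b\sin x<1$ throughout $[a,1]$ so that the derivative of $g$ is finite. This holds because $\sin x\le\sin\chi<1$; this is exactly where the hypothesis $m\Delta\le\chi<\pi/2$ enters. The edge case $\rho_0=0$ is also covered, since $a\ge0$ still keeps $\xi\in[0,1]$.
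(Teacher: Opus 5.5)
Your proof is correct and is essentially the paper's argument. Both proofs bound $x-\arcsin(a\sin x)$ by the length $(1-a)|\sin x|$ of the segment between $a\sin x$ and $\sin x$, times the largest derivative of $\arcsin$ on that segment. The paper writes this as an integral in $u$ and uses the uniform bound $1/\cos(m\Delta)$. Your mean value theorem in $b$ is the same integral after the substitution $u=b\sin x$, and it gives $1/\cos x$ before you invoke monotonicity of $\tan$.

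One wording fix: both sides of your reduced inequality are even in $x$, not odd; it is the expression inside the absolute value that is odd. Evenness is what justifies restricting to $x\ge 0$.
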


The proof is given in Appendix~\ref{app:proof_of_amplitude_error_bound}. In the regime $m\Delta\approx\chi$, the amplitude-induced bias bound scales as $(1-\rho_0^m)/m$, which decreases with $m$; this contrasts with the case of fixed $\Delta$, where the bias
bound increases with $m$ within the admissible branch.

Due to the ambiguity of $\arcsin (\hat S)$, we require the residual phase to lie on an unambiguous branch, i.e., $|\theta-\vartheta|\le\Delta$ with $m\Delta<\pi/2$. The statistical phase error is shown in Eq.~\eqref{eq:paraller_error}. The shot count to obtain the phase within the required statistical error is given in Lemma~\ref{lem:parallel_hadamard_shot_count}.

\subsection{Adaptive entanglement-assisted Hadamard test with imperfect state preparation}

Initially, we have a coarse reference phase $\vtheta_0$, whose distance to the effective phase is upper-bounded by $|\theta_\psi - \vtheta_0|\le\Delta_0$, and the amplitude $\rho$ is lower bounded by some value $\rho\ge \rho_0$. At the $t$-th round, we set the reference phase as the estimate from the previous round, i.e., $\vtheta_t=\hat{\theta}_{\psi, t-1}$. Meanwhile, the corresponding reference bound is the target accuracy of the previous round $\Delta_t=\epsilon_{t-1}$, so that $|\theta_\psi-\vtheta_t|\le \Delta_t$. The goal of this round is to estimate the effective phase up to accuracy $\epsilon_{t}=\gamma\Delta_t$ for some constant $\gamma\in(0,1)$.

In the imperfect state preparation setting, the error contains two components: the bias caused by using only a lower bound $\rho\ge \rho_0$ of amplitude; and the statistical error caused by finite sampling. Therefore, to guarantee the contraction
\begin{equation}
    |\hat{\theta}_{\psi,t} - \theta_\psi|\le\epsilon_t=\gamma\Delta_t,
\end{equation}
we split the round-wise error budget into
\begin{equation*}
    \epsilon_{\stat,t}:=c_{\stat}\Delta_t,
    \qquad
    \epsilon_{\bias,t}:=c_{\bias}\Delta_t,    
\end{equation*}
where $c_\stat>0$ and $c_\bias>0$ are real coefficients, satisfying $c_{\stat}+c_{\bias}=\gamma$. The two parts can be considered as the \textit{budgets} of the statistical error and the amplitude bias error, respectively. The amplitude bias error $\varepsilon_\bias$ and the statistical error $\varepsilon_\stat$ should be smaller than the corresponding budgets:
\begin{equation*}
    |\veps_{\bias,t}|\le\epsilon_{\bias,t},
    \qquad
    |\veps_{\stat,t}|\le\epsilon_{\stat,t}.
\end{equation*}

The amplification $m_t$ is chosen under two constraints. First, as in the perfect-state case, $m_t$ cannot be arbitrarily close to $\frac{\pi}{2\Delta_t}$, which would lead to an infinite shot count. We therefore impose an upper bound $\chi$ such that $m_t\Delta_t\le\chi<\pi/2$. Second, the bias must stay within its budget, i.e., $|\varepsilon_{\bias,t}|\le\epsilon_{\bias,t}$. By Theorem~\ref{thm:single_round_imperfect}, it is sufficient to choose the amplification of the $t$-th round as
\begin{equation}\label{eq:choice_of_mt_imperfect}
    m_{t}=\max \left\{m\, \Big|\, \frac{1-\rho_0^{m}}{m} \tan(m\Delta_t) \le \epsilon_{\bias, t},\, m\in[1, \lfloor\frac{\chi}{\Delta_t}\rfloor ]\right\}.
\end{equation}
The choice of $m_t$ can be efficiently determined by binary search. To guarantee that a feasible amplification exists, the inequality 
\begin{equation}
    (1-\rho_0)\tan(\Delta_t)\le c_{\mathrm{bias}}\Delta_t,
\end{equation}
should be satisfied. For small $\Delta_t$, we have $\tan(\Delta_t)/\Delta_t\approx 1$, and the condition reduces to $c_{\mathrm{bias}}\gtrsim 1-\rho_0$.

The remaining task is to choose enough shots so that the statistical error $\varepsilon_\stat$ does not exceed its budget $\epsilon_\stat$ with success probability at least $1-p_t$. By Lemma~\ref{lem:parallel_hadamard_shot_count}, it suffices to choose
\begin{equation}
    N_t = \left\lceil\frac{2}{r_t^2} \ln\frac{2}{p_t}\right\rceil,
\end{equation}
where $r_t=\frac{m_t\epsilon_{\stat,t}\left(\sqrt{\cos^2(m_t\Delta_t)+m_t^2\epsilon_{\stat,t}^2} - m_t\epsilon_{\stat, t} \sin(m_t\Delta_t)\right)}{1+m_t^2\epsilon_{\stat,t}^2}$. The shot count depends on $m_t$, so $m_t$ should be chosen to minimize it subject to the feasibility constraints.

For convenience, we set the failure probability to be the same in each round, i.e., $p_{\mathrm{round}}=p_t$ for all $t=1,\dots,T$. Thus, for a given total success probability $1-p_{\fail}$, one can easily derive that
\begin{equation*}
    p_{\mathrm{round}}:=1-(1-p_{\fail})^{1/T}.
\end{equation*}
Since we aim to estimate the effective phase $\theta_\psi$ up to a target accuracy $\epsilon$, we require $|\hat{\theta}_{\psi,T-1}-\theta_\psi| \le \epsilon_{T-1} =\gamma^T\Delta_0\le\epsilon$, where $T-1$ is the final round. Thus the total number of rounds should be
\begin{equation}
    T=\left\lceil\log_{1/\gamma}\frac{\Delta_0}{\epsilon}\right\rceil.
\end{equation}
We summarize the adaptive procedure in Algorithm~\ref{alg:iter_phase_estimation_main}.

\begin{algorithm}[h]
\renewcommand{\algorithmicrequire}{\textbf{Input:}}
\renewcommand{\algorithmicensure}{\textbf{Output:}}
    \caption{Adaptive algorithm for phase estimation with imperfect state preparation }
    \label{alg:iter_phase_estimation_main}
    \begin{algorithmic}[1]
        \REQUIRE Initial reference $\vtheta_0$, initial effective-phase bound $\Delta_0$, target accuracy $\epsilon>0$, total failure probability $p_{\fail}\in(0,1)$, contraction $\gamma\in(0,1)$, coefficients $c_\stat$ and $c_\bias$, branch margin $\chi\in(0,\pi/2)$, amplitude lower bound $\rho_0$.
        \ENSURE The estimated phase $\hat{\theta}_\psi$.
        \STATE $T\gets\left\lceil \log_{1/\gamma}(\Delta_0/\epsilon)\right\rceil$
        \STATE $p_{\mathrm{round}}\gets 1-(1-p_{\fail})^{1/T}$
        \FOR{$t=0$ to $T-1$}
            \STATE The reference bound of the $t$-th round $\Delta_t\gets \gamma^{t} \Delta_0$.
            \STATE The target phase estimation accuracy of $t$-th round $\epsilon_t \gets \gamma^{t+1}\Delta_0$.
            \STATE Calculate the budgets for bias error and statistical error $\epsilon_{\bias,t}\gets c_{\bias}\Delta_t$, \quad $\epsilon_{\stat,t}\gets c_{\stat}\Delta_t$
            \STATE Determine the amplification $m_t\gets$ Eq.~\eqref{eq:choice_of_mt_imperfect}.
            \STATE Obtain $\hat{\theta}_{\psi,t}\gets$ Algorithm~\ref{alg:single_round_parallel_hadamard} with reference $\vtheta_t$, amplification $m_t$, reference bound $\Delta_t$, accuracy $\epsilon_{\stat,t}$, and failure probability $p_{\mathrm{round}}$.
            \STATE The reference for the next round $\vtheta_{t+1}\gets \hat{\theta}_{\psi,t}$.
        \ENDFOR
        \STATE Output the estimation phase  $\htheta_\psi\gets\htheta_{\psi,T-1}$.
    \end{algorithmic}
\end{algorithm}

For a fixed reference phase $\vtheta_t=\vtheta_0$ and a fixed amplification $m_t=m_0$, the amplitude mismatch produces a bias that cannot be removed by increasing the number of shots: additional measurements only suppress the statistical fluctuation around the bias. Consequently, the EHT generally exhibits an error floor determined by the imperfect state preparation and the unknown amplitude. In contrast, the AEHT repeatedly updates the reference phase and shrinks the error, so that the amplitude-induced algorithmic bias is forced to remain within the prescribed round-wise bias budget. Therefore, the AEHT separates the irreducible preparation error, which is fixed by the prepared state $|\psi\rangle$, from the reducible algorithmic error, which can be systematically suppressed by the iterative procedure.

\section{Comparison of Hardware Restart Time}\label{sec:comarison}

In previous sections, the cost of phase estimation was quantified by the number of shots. However, on large-scale quantum processors, this metric does not fully capture the practical overhead. With a growing number of available qubits, many independent shots can be executed in parallel during a single experimental cycle. The limiting resource therefore shifts from the total number of shots to the number of times the full device must be initialized, executed, measured, and reset, referred to as the \textit{device-restart count}. Such restarts are particularly costly on platforms like neutral-atom arrays, where atom loading and rearrangement introduce substantial latency \cite{bluvstein2024logical, henriet2020quantum, saffman2016quantum, ebadi2021quantum}. In this section, we provide a theoretical comparison of the restart overhead between the fixed-$m$ entanglement-assisted Hadamard test and the proposed adaptive protocol.

Suppose the quantum processor comprises $Q$ qubits. For an amplification factor $m$, the entanglement-assisted Hadamard test circuit uses an $m$-qubit GHZ ancilla and $mn_\sys$ system-register copies, occupying a total of $m(n_\sys+1)$ qubits, where $n_\sys$ refers to the number of system qubits. Within one device restart, the number of independent circuit instances that can be executed in parallel is therefore
\begin{equation}
\kappa(m) = \left\lfloor \frac{Q}{m(n_\sys+1)} \right\rfloor .
\end{equation}
Denote the shot count for $t$-th round as $N_t$, then the corresponding number of device restarts is
\begin{equation}
R = \sum_{t} \left\lceil \frac{N_t}{\kappa(m_t)} \right\rceil,
\label{eq:restart_metric}
\end{equation}
where the sum runs over all rounds in the protocol. For the purpose of asymptotic analysis, we neglect the ceiling and floor functions, which affect only lower-order factors and do not alter the scaling with respect to the target precision $\epsilon$.

Consider the EHT with a fixed amplification $m_0=\lfloor\chi/\Delta_0\rfloor$. Its shot count is given by Eq.~\eqref{eq:N_EHT} with $m=m_0$. Since each restart packs $\kappa(m_0)=\lfloor Q/[m_0(n_\sys+1)]\rfloor$ shots, the corresponding restart count is
\begin{equation}
R_{\mathrm{EHT}}
\approx
\frac{m_0(n_\sys+1)}{Q}N_{\mathrm{EHT}}
\approx
\frac{2(n_\sys+1)\Delta_0}{Q\chi\cos^2\chi}
\ln\frac{2}{p_{\mathrm{fail}}}
\cdot
\frac{1}{\epsilon^2},
\label{eq:R_fixed}
\end{equation}
where we used $m_0\Delta_0\approx\chi$. Thus, the fixed-$m$ EHT exhibits a \textit{quadratic} growth in restarts as the required precision tightens, i.e., $R_{\mathrm{EHT}}=\mathcal{O}(1/\epsilon^2)$.

Consider the proposed AEHT scheme, the estimation proceeds over
\begin{equation}
T = \left\lceil \log_{1/\gamma} \frac{\Delta_0}{\epsilon} \right\rceil
\end{equation}
rounds, with $\Delta_t=\gamma^t\Delta_0$, $m_t=\lfloor\chi/\Delta_t\rfloor$, and $\epsilon_t=\gamma\Delta_t$. From Eq.~\eqref{eq:adaptive_round_shot_count}, the per-round shot count is approximately constant,
\begin{equation}
N_t\approx N_{\mathrm{round}}:=\frac{2C_{\chi,\gamma}^2}{\gamma^2\chi^2\cos^4\chi}\ln\frac{2}{p_{\mathrm{round}}},
\end{equation}
where $C_{\chi,\gamma}$ is defined in Sec.~\ref{sec:Adaptive entanglement-assisted Hadamard test} and $p_{\mathrm{round}}=1-(1-p_{\mathrm{fail}})^{1/T}$. For large $T$, $p_{\mathrm{round}}\approx p_{\mathrm{fail}}/T$, so $\ln(2/p_{\mathrm{round}})$ introduces only a doubly logarithmic dependence on $\epsilon$.

The number of restarts required in round $t$ is
\begin{equation}
R_t \approx \frac{(n_\sys+1)m_t}{Q}N_t \approx \frac{(n_\sys+1)\chi N_{\mathrm{round}}}{Q\Delta_0}\gamma^{-t}.
\end{equation}
Summing over all $T$ rounds, the total restart count for the adaptive protocol is
\begin{align}
R_{\mathrm{AEHT}} &= \sum_{t=0}^{T-1}R_t \approx \frac{(n_\sys+1)\chi N_{\mathrm{round}}}{Q\Delta_0} \sum_{t=0}^{T-1}\gamma^{-t} \nonumber\\
&\approx \frac{(n_\sys+1)\chi N_{\mathrm{round}}}{Q\Delta_0} \cdot \frac{\gamma\Delta_0}{(1-\gamma)\epsilon} \nonumber\\
&= \frac{2(n_\sys+1)C_{\chi,\gamma}^2}{Q(1-\gamma)\gamma\chi\cos^4\chi} \ln\frac{2}{p_{\mathrm{round}}} \cdot \frac{1}{\epsilon}.
\label{eq:R_adapt}
\end{align}
The proposed AEHT therefore yields a restart count that grows only as $\mathcal{O}(1/\epsilon)$, up to doubly logarithmic corrections.

\subsection{Advantage of the AEHT}

Comparing the leading-order scaling of Eqs.~\eqref{eq:R_fixed} and \eqref{eq:R_adapt}, we obtain
\begin{align}
\frac{R_{\mathrm{AEHT}}}{R_{\mathrm{EHT}}} &\approx \frac{C_{\chi,\gamma}^2}{(1-\gamma)\gamma\cos^2\chi} \cdot \frac{\ln(2/p_{\mathrm{round}})}{\ln(2/p_{\mathrm{fail}})} \cdot \frac{\epsilon}{\Delta_0} \nonumber\\
&\approx \frac{C_{\chi,\gamma}^2}{(1-\gamma)\gamma\cos^2\chi} \left(1+\frac{\ln T}{\ln(2/p_{\mathrm{fail}})}\right) \frac{\epsilon}{\Delta_0}.
\label{eq:ratio}
\end{align}
Ignoring constants and using $T=\mathcal{O}(\log(1/\epsilon))$, we obtain
\begin{equation}
\frac{R_{\mathrm{AEHT}}}{R_{\mathrm{EHT}}} = \mathcal{O}\!\left(\epsilon\cdot\log\log\frac{1}{\epsilon}\right).
\end{equation}
For high-precision estimation where $1\gg \epsilon$, the factor $\epsilon$ dominates the ratio, while the logarithmic term grows only as $\log \log(1/\epsilon)$. 
This demonstrates that the AEHT method achieves a superlinear reduction in device-restart count compared with the fixed-$m$ EHT approach in the high-precision regime. 

\subsection{Hardware-constraint amplification}
\label{sec:hardware_refinements}

In reality, the number of qubits on the quantum processors are finite $Q<+\infty$, implying the choice of amplification $m$ is also limited by the capability of quantum hardware, i.e.,
\begin{equation}
    m\le m_{\hw}:=\left\lfloor \frac{Q}{n_{\sys}+1}\right\rfloor
\end{equation}
This is the largest amplification $m$ for which an amplification-$m$ entanglement-assisted Hadamard-test circuit fits within the $Q$ available qubits. In Algorithm~\ref{alg:iterative_exact_eigenstate}, the ideal amplification is $m_t^{\rm ideal}=\lfloor\chi/\Delta_t\rfloor$, while in Algorithm~\ref{alg:iter_phase_estimation_main} the ideal amplification $m_t^{\rm ideal}$ is given by Eq.~\eqref{eq:choice_of_mt_imperfect}. In both cases, the physical qubit number is incorporated by comparing the ideal value with $Q/(n_{\sys}+1)$ and keeping the smaller one:
\begin{equation}
    m_t=\min\left\{m_t^{\rm ideal}, m_{\hw}\right\}.
\end{equation}
To choose the amplification that corresponds to the fewest shot count, we design an optimization algorithm. The detailed derivation and the algorithm in pseudo-code are shown in Appendix~\ref{app:Optimal_amplification_with_imperfect_state_preparation}. When the reference error $\Delta_t$ is still large, the ideal amplification lies below $m_{\hw}$, and the two algorithms reduce to their earlier forms. When $\Delta_t$ becomes sufficiently small, the amplification saturates at $m_{\hw}$, and further precision is gained by increasing the shot count rather than by enlarging the GHZ state.

Once the amplification saturates, both schemes operate with a fixed, bounded amplification: the fixed-$m$ EHT method uses $m_0$, while the adaptive method saturates at $m_{\hw}$. To see the resulting scaling, consider a saturated round with $m_t=m_{\hw}$. By Lemma~\ref{lem:parallel_hadamard_shot_count} and $\epsilon_t=\gamma\Delta_t$, its shot count is
\begin{equation}
    N_t
    \approx
    \frac{2}{m_{\hw}^2\gamma^2\Delta_t^2\cos^2(m_{\hw}\Delta_t)}
    \ln\frac{2}{p_{\mathrm{round}}},
\end{equation}
so its restart count is
\begin{align}
    R_t &\approx \frac{(n_{\sys}+1)m_{\hw}}{Q}N_t\\
    &\approx \frac{2(n_{\sys}+1)}{Q m_{\hw}\gamma^2\Delta_t^2\cos^2(m_{\hw}\Delta_t)} \ln\frac{2}{p_{\mathrm{round}}}.
\end{align}
    
Since $\Delta_t=\gamma^t\Delta_0$, the sum over the saturated rounds is dominated by the final round and scales as $1/\epsilon^2$, which is
\begin{equation}
    R_{\mathrm{AEHT}}^{\mathrm{sat}} \approx \frac{1}{m_{\hw}\epsilon^2}.
\end{equation}
The fixed-$m$ restart count in Eq.~\eqref{eq:R_fixed} similarly scales as $R_{\mathrm{EHT}}\sim 1/(m_0\epsilon^2)$. Ignoring constant factors, the ratio is therefore
\begin{equation}
    \frac{R_{\mathrm{AEHT}}^{\mathrm{sat}}}{R_{\mathrm{EHT}}}\approx \frac{m_0}{m_{\hw}} < 1,
\end{equation}
because saturation requires $m_{\hw}>m_0$. Hence the AEHT still has a constant-factor advantage over the fixed-$m$ EHT in this regime.

\section{Numerical experiments}
\label{sec:numerical_experiments}

We now numerically demonstrate the effectiveness of the adaptive protocol developed in this work. For concreteness, we encode the target phase $\theta_\tar=2.0$ into a single-qubit phase unitary
\begin{equation*}
    U=\operatorname{diag}(1,e^{\ri\theta_{\tar}}).
\end{equation*}
The corresponding eigenstate is simply $\ket{1}$. All experiments use the initial reference $\vtheta_0=1.8$, the initial bound $\Delta_0=0.2$, the branch margin $\chi=\pi/4$, and total success probability $1-p_\fail=0.95$.

To compare the device-restart count, we model a device with $Q=2500$ qubits and a one-qubit system register, and the corresponding hardware constraint amplification is $m_\hw=1250$. An amplification-$m$ EHT circuit occupies $2m$ qubits, so the number of independent instances that fit in one initialization is
\begin{equation*}
    \kappa(m)=\left\lfloor\frac{2500}{2m}\right\rfloor.
\end{equation*}
If round $t$ uses $N_t$ shots at amplification $m_t$, its restart cost is $\lceil N_t/\kappa(m_t)\rceil$, and the schedule cost is
\begin{equation} \label{eq:numerical_restart_cost}
    R=\sum_t\left\lceil\frac{N_t}{\kappa(m_t)}\right\rceil.
\end{equation}

In the following, we compare the hardware restart counts of the proposed adaptive-$m$ method and fixed-$m$ method. However, from Algorithm~\ref{alg:iter_phase_estimation_main}, the adaptive-$m$ method may certify a precision better than the requested target $\epsilon$, thereby paying more cost. To make a fair comparison, we replace the final-round precision by
\begin{equation}
    \epsilon_{T-1}
    =
    \max\left\{\epsilon,\gamma^{T}\Delta_0\right\}
\end{equation}
so that the last round does not overshoot the requested accuracy. The selection of the amplification factors uses the Algorithm~\ref{alg:min_restart_amplification} in Appendix~\ref{app:Optimal_amplification_with_imperfect_state_preparation}.

\subsection{Restart reduction with perfect eigenstate preparation}
\label{sec:numerical_exact_restarts}
Suppose quantum circuits are noiseless and we can perfectly prepare the eigenstate $\ket{1}$. For different target accuracies, we compare the required device-restart count of the SHT method (pink-diamond curve), EHT method with a $3$-qubit GHZ entangled state (green-triangle curve), and the proposed AEHT method (blue-circle curve), as shown in Fig.~\ref{fig:num_exact_restarts}. 
The three methods use the same hardware budget of $Q=2500$ qubits, allowing up to $1250$ SHT shots, $416$ EHT shots, or $\lfloor1250/m_t\rfloor$ AEHT shots in round $t$ to run in parallel per restart. The device-restart counts therefore account for the parallel capacity of each method under the same hardware constraint.

All three restart counts increase as the target accuracy becomes more stringent. Before the hardware limit is reached, the AEHT restart count grows much more slowly than those of the other two methods, and its advantage in device-restart count widens. Once the amplification saturates, the AEHT restart count grows faster than before, but it still retains an advantage of roughly two orders of magnitude over the other two methods.
The numerical results agree with our analysis as shown in Sec.~\ref{sec:comarison}. The AEHT method unlocks the amplification power of large entangled states, and the maximum amplification of the AEHT method increases from $m=11$ at target accuracy $\epsilon=10^{-2}$ to $m=1250$ at target accuracy $\epsilon=10^{-4}$.

\begin{figure}[h]
    \centering
    \includegraphics[width=\columnwidth]{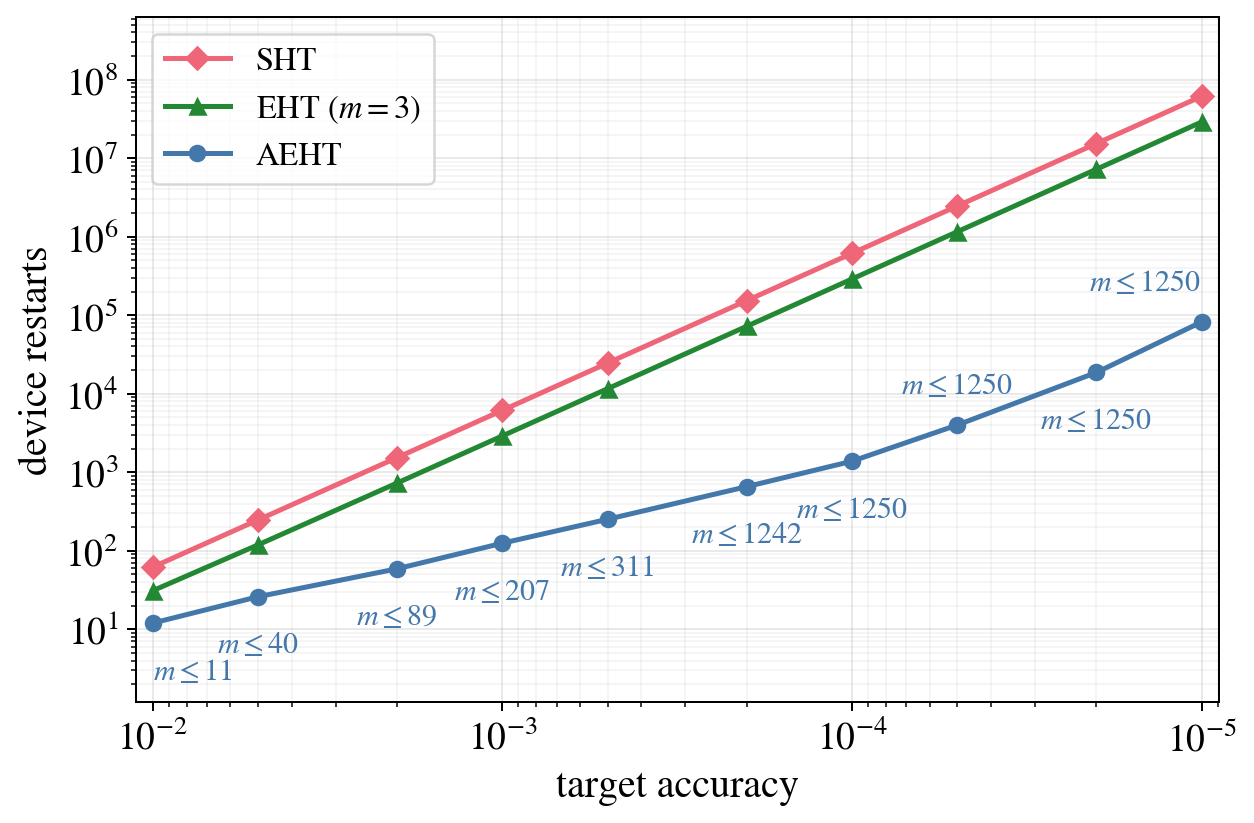}
    \caption{Device-restart count for exact eigenstate preparation. The labels on the AEHT refer to the maximum amplification used.}
    \label{fig:num_exact_restarts}
\end{figure}

We next focus on the target accuracy $\epsilon=10^{-3}$. The amplification $m_t$ and estimated phase error $\epsilon_t$ with respect to the $t$-th round are shown in Fig.~\ref{fig:num_exact_schedule}. In this case, the number of rounds is $T=6$. In the first round, due to the limitation of the reference phase, the maximal amplification can only be $m_0=2$. As $t$ increases, the estimated phase error (blue-circle curve) decreases, while the amplification $m_t$ (orange curve) grows. Eventually, in the last round, the estimated phase error falls below the target accuracy (black dotted line), and the corresponding amplification becomes $m_5=207$. 
\begin{figure}[h]
    \centering
    \includegraphics[width=\columnwidth]{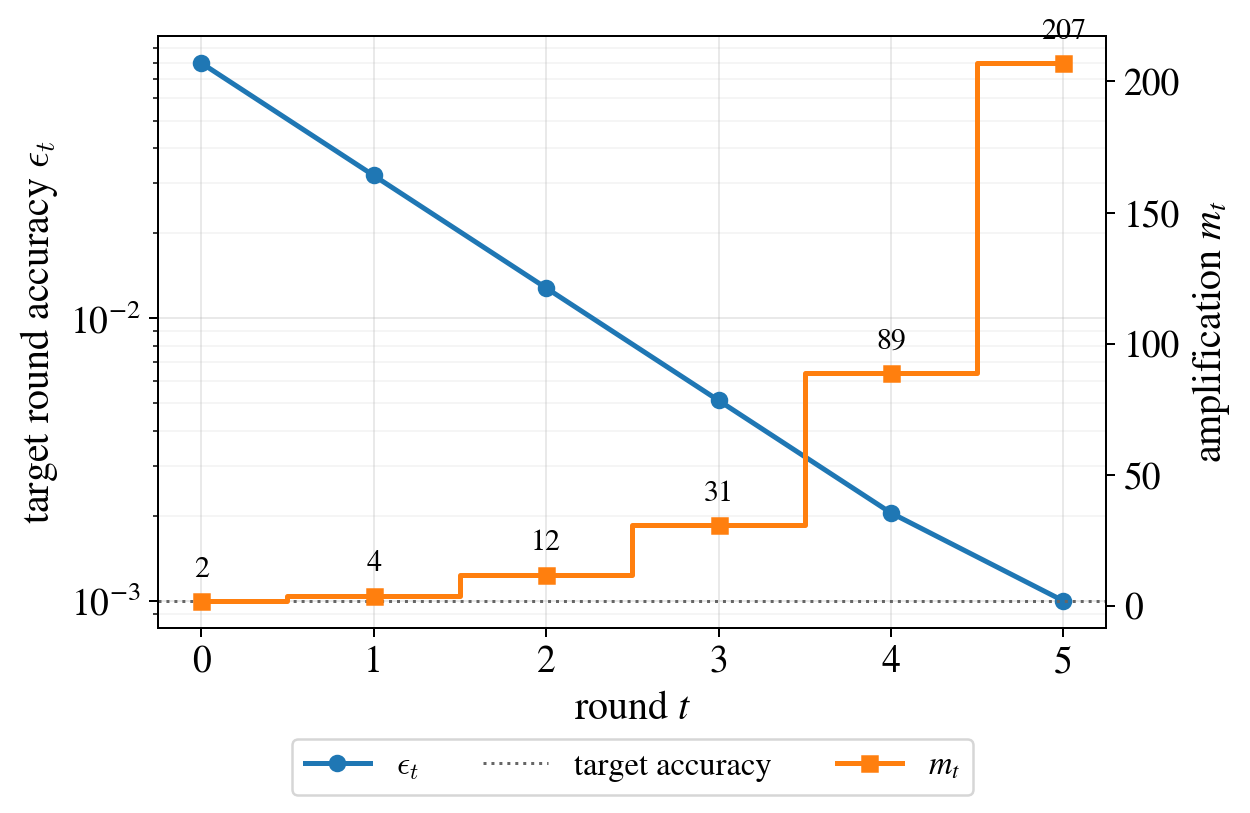}
    \caption{The amplification $m_t$ and estimated phase error $\epsilon_t$ with respect to the increase of round $t$ in the Algorithm~\ref{alg:iterative_exact_eigenstate}. The dotted line refers to the target accuracy $\epsilon=10^{-3}$.}
    \label{fig:num_exact_schedule}
\end{figure}

\subsection{Restart reduction with imperfect eigenstate preparation}
\label{sec:numerical_imperfect_restarts}
In reality, it is generally hard to prepare the desired state perfectly. Instead of the eigenstate $\ket{1}$, what we actually prepare is
\begin{equation*}
    \ket{\psi}=\sqrt{\eta}\ket{0}+\sqrt{1-\eta}\ket{1}, \qquad \eta=0.01.
\end{equation*}
The corresponding overlap is
\begin{equation*}
    \bra{\psi}U\ket{\psi}=\eta+(1-\eta)e^{\ri\theta_{\tar}}:=\rho e^{\ri\theta_\psi},
\end{equation*}
When conducting experiments, the exact values are generally unknown, so we assume the amplitude is lower bounded by $\rho_0=1-2\eta=0.98$. As introduced in Section~\ref{sec:decomposition}, the error contains three parts: preparation error, statistical error, and amplitude bias error. The preparation error in this case is $|\theta_\psi-\theta_{\tar}|$, which is irreducible. Thus, in this section, the accuracy $\epsilon$ refers to the algorithmic error $|\hat{\theta}_\psi-\theta_\psi|$, where $\hat{\theta}_\psi$ refers to the estimated phase, and the preparation error is excluded. The statistical error vanishes as the device-restart count increases. 

For a fixed-$m$ EHT method, the amplitude bias error cannot be mitigated, no matter how many device-restart counts we take. The comparison of device-restart count over different target accuracy is shown in Fig.~\ref{fig:num_imperfect_restarts}. For the SHT method (pink-diamond curve) and the fixed-$m$ EHT method (green-triangle curve) with $m=2$, as the target effective-phase accuracy $\epsilon$ approaches the bias bound (dotted curve and dash-dot curve, respectively), the device-restart counts increase dramatically. For the two conventional methods, they can only approach the bias bound and cannot achieve an accuracy better than the bias bound. In contrast, the AEHT method is not limited by the bias bound and can estimate the quantum phase more accurately than the conventional methods with fewer device-restart counts.

\begin{figure}[h]
    \centering
    \includegraphics[width=\columnwidth]{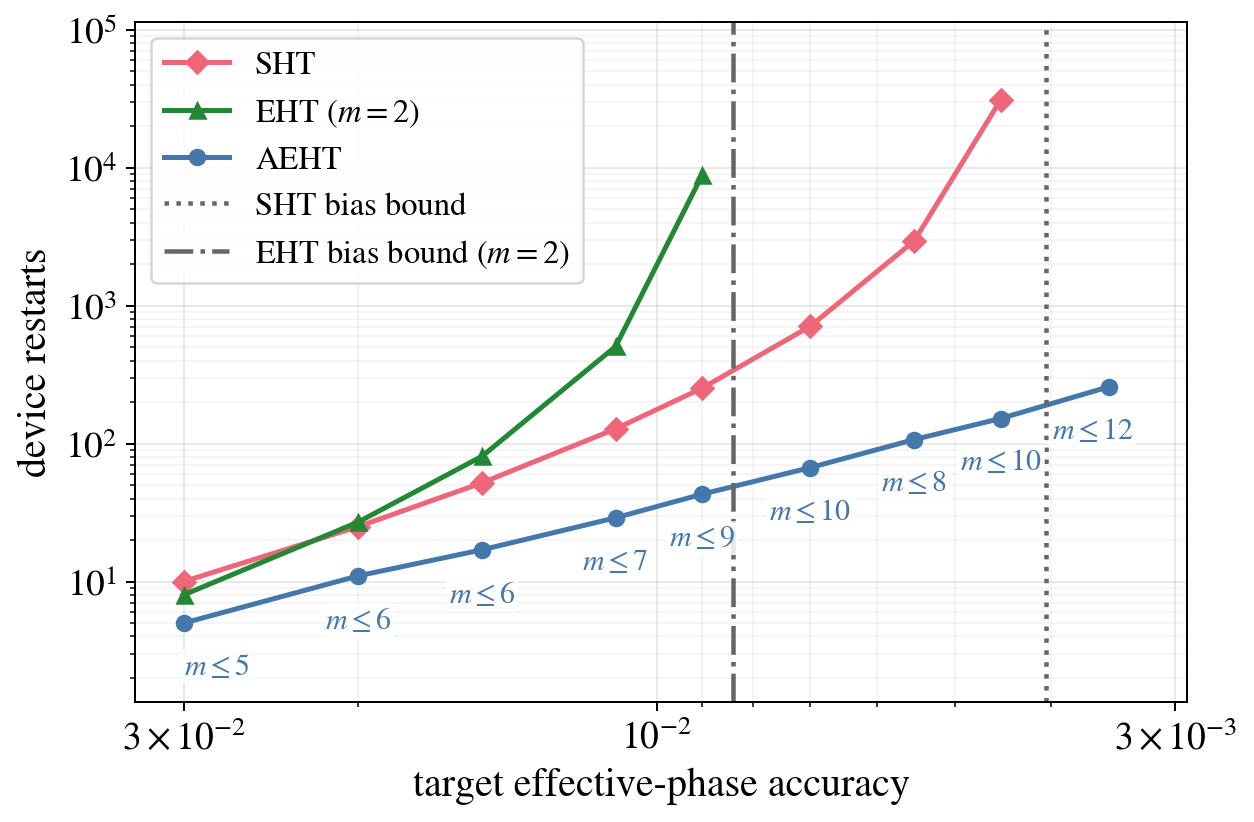}
    \caption{Device-restart count for imperfect eigenstate preparation. The vertical dotted curve and dash-dot curve mark the bias bounds of SHT and Fixed-$m$ EHT ($m=2$), respectively.}
    \label{fig:num_imperfect_restarts}
\end{figure}

\subsection{Circuit-noise robustness}
\label{sec:noisy_simulation}

In previous sections, we studied the upper bound of the device restart count to achieve target phase accuracy. Here, we demonstrate the robustness of the proposed AEHT method by simulating noisy quantum circuits. We suppose the prepared quantum states are imperfect as shown in Sec.~\ref{sec:numerical_imperfect_restarts}. When simulating the quantum circuit, we also assume the quantum gates are corrupted by a depolarizing channel $\cN_p$, which maps the input state toward the maximally mixed state with probability $p$, i.e.,
\begin{equation}
    \cN_p(\psi) = (1-p)\psi + p\frac{I}{d},
\end{equation}
where $p$ refers to the noise level and $I/d$ refers to the maximally mixed state. In the simulation, we set the single-qubit and two-qubit gates to be corrupted by single-qubit and two-qubit depolarizing channels with probabilities $p_1=10^{-4}$ and $p_2=10^{-3}$, respectively. Moreover, the readout error is set to $p_{\rm ro}=10^{-2}$. The numerical simulations were implemented using Qiskit~\cite{javadi2024quantum} version 2.2.3 and executed on a CPU-based server equipped with Intel Xeon Gold 6226R processors operating at 2.90~GHz, with 64 total hardware threads and 1,006.6~GiB of available RAM. No GPU acceleration was used.

In the circuit-sampled numerical experiments, we set hardware to contain $Q=2500$ qubits, and due to the classical memory limitation of the server we used, we set the amplification $m\le 8$ .
The achievable accuracy for different device-restart counts is shown in Fig.~\ref{fig:num_noisy_imperfect}. The dashed pink curve with hollow diamonds and the dashed blue curve with hollow circles denote the theoretical upper bounds on the achievable accuracy of the SHT and AEHT methods under imperfect state preparation, respectively. As the device-restart count increases, the SHT bound saturates at the bias limit indicated by the dotted horizontal line, whereas the AEHT bound is not limited by such a bias floor and keeps improving; this behavior is also shown in Fig.~\ref{fig:num_imperfect_restarts}. The dashed pink curve with solid diamonds and the dashed blue curve with solid circles show the actual errors of the SHT and AEHT methods obtained from sampling noisy circuits. As the device-restart count increases, the actual SHT error converges to a worse value because of circuit noise. In contrast, the actual AEHT error follows the trend of its upper bound and produces an increasingly accurate phase estimate. These numerical experiments demonstrate the robustness of the proposed AEHT method.

\begin{figure}[h]
    \centering
    \includegraphics[width=\columnwidth]{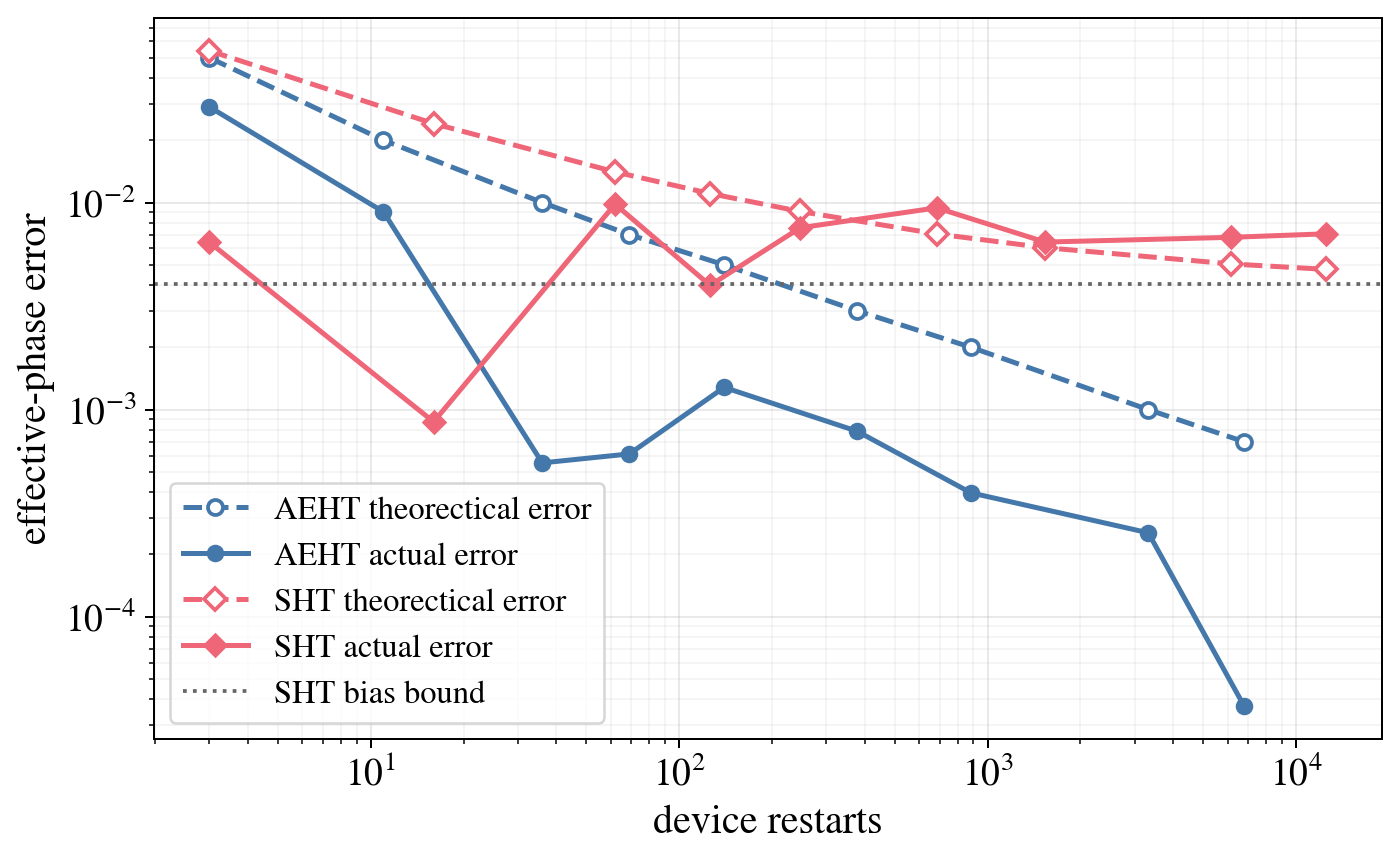}
    \caption{Circuit-sampled effective accuracy versus device restarts under the same noise model for imperfect eigenstate preparation. For each target $\epsilon$, the plot shows the requested target accuracy and the realized noisy effective-phase error at the corresponding restart budget.}
    \label{fig:num_noisy_imperfect}
\end{figure}

\section{Conclusion and discussion}\label{sec:conclusion}

In this work, we proposed an AEHT for high-precision quantum-phase estimation. By iteratively refining the reference phase, the protocol permits progressively larger GHZ-state amplification and overcomes the reference-accuracy limitation of conventional fixed-amplification EHT methods. For exact eigenstates, the resulting shot complexity is $\mathcal{O}\left(\log\frac1\epsilon\right)$ up to doubly logarithmic factors when the amplification does not reach the hardware limit. 
We further extended the method to imperfect eigenstate preparation. By separating the irreducible preparation error from the statistical error and amplitude-induced bias, the adaptive protocol systematically suppresses the algorithmic error, whereas a conventional fixed-amplification test generally exhibits a nonvanishing bias floor.

To better characterize practical cost, we introduced the device-restart count, which accounts for parallel circuit execution under a finite qubit budget. In the hardware-unconstrained regime, the adaptive protocol achieves a restart complexity of $\cO(1/\epsilon)$, up to logarithmic factors, compared with $\cO(1/\epsilon^2)$ for fixed amplification. When the amplification reaches the hardware limit, the quadratic scaling is recovered, while a constant-factor advantage remains. Numerical results support these conclusions for both exact and imperfect eigenstates and demonstrate robustness under representative circuit noise.

The proposed method provides a hardware-aware route to precision phase estimation with substantially reduced device-restart overhead on near-term quantum processors. It would be interesting to demonstrate the proposed AEHT method on quantum hardware, such as neutral atom platform, in the future.

\textbf{Code availability.--}
The code has been uploaded to Github, which can be found at {\url{https://github.com/HengzhunChen/EntangleHT}}.

\textbf{Acknowledgements.--}
B.Zhao would like to thank Xuanqiang Zhao, Ge Bai and Qiushi Liu for their insightful discussions. This work was supported in part by the National Natural Science Foundation of China under Grant No. 12271109 and 12526211; by the Shanghai Pilot Program for Basic Research-Fudan University under Grant No. 21TQ1400100 (22TQ017); by the Scientific Research Innovation Capability Support Project for Young Faculty under Grant No. SRICSPYF-ZY2025159; and by the Xuemin Institute of Advanced Studies, Fudan University.

\bibliography{PRX_Q_ref.bib}


\vspace{2cm}
\onecolumngrid
\vspace{2cm}
\begin{center}
{\textbf{\large Appendix for Efficient phase estimation with adaptive entanglement-assisted Hadamard test}}
\end{center}

\appendix

\setcounter{proposition}{0}
\renewcommand{\theproposition}{S\arabic{proposition}}

\section{Conventional quantum phase estimation} \label{app:quantum_phase_estimation}

Quantum phase estimation~\cite{kitaev1995quantum} addresses the following task: given a unitary operator $U$ and one of its eigenstates $\ket{\psi}$ satisfying $U\ket{\psi} = e^{2\pi i \theta}\ket{\psi}$, estimate the unknown phase $\theta\in[0,1)$ to $n$ bits of precision.

The standard formulation employs two registers: an $n$-qubit ancilla register initialized to $\ket{0}^{\ox n}$ and a target register prepared in $\ket{\psi}$. After applying Hadamard gates to all ancilla qubits, each ancilla qubit $j$ controls an application of $U^{2^j}$ to the target register. By the phase kickback mechanism, the ancilla register is transformed into the state $\frac{1}{2^{n/2}} \sum_{k=0}^{2^n -1} e^{2\pi ik\theta} \ket{k}$, which is precisely the quantum Fourier transform of the $n$-bit integer closest to $2^n\theta$. Applying the inverse quantum Fourier transform ($QFT^{-1}$) to the ancilla register and measuring in the computational basis then yields an $n$-bit approximation of $\theta=0.\theta_1\cdots \theta_n$. The circuits for QPE is shown in Fig.~\ref{fig:conventional_QPE}.

\begin{figure}[h]
    \centering
    \includegraphics[width=0.5\linewidth]{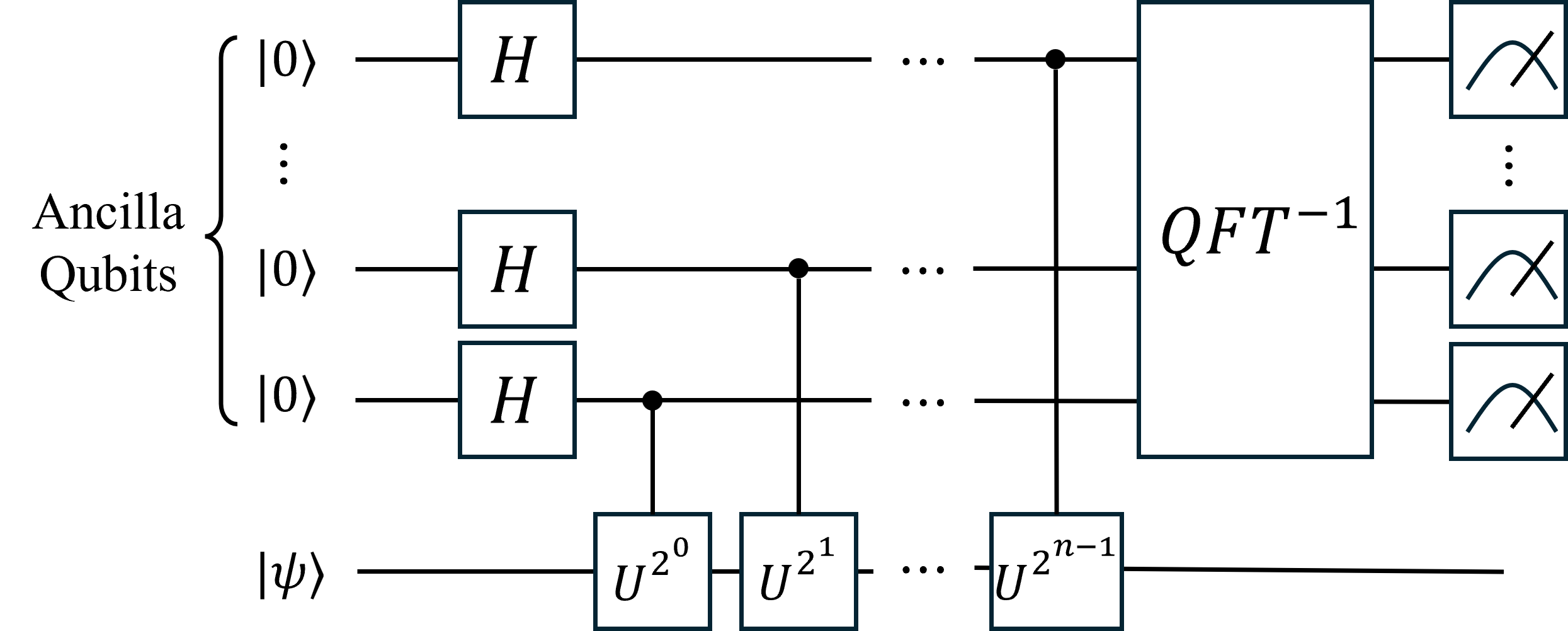}
    \caption{Quantum circuit for QPE.}
    \label{fig:conventional_QPE}
\end{figure}

Obviously, the depth of the quantum circuit for QPE is large. Considering that quantum noise on near-term quantum devices is unavoidable, the noise accumulates with the depth of the circuit, making the measurement results unreliable. Therefore, such a method for estimating quantum phase is not practical for near-term quantum devices.



\section{Expectation values of standard and entanglement-assisted Hadamard tests}
\label{app:hadamard_expectation_values}

The standard Hadamard test is recovered from the entanglement-assisted construction by setting $m=1$. We therefore derive the expectation value for general $m\ge1$, covering both tests in a single proof.

\begin{lemma}\label{lem:hadamard_expectation_values}
    Assume $U\ket{\phi}=e^{\ri\theta}\ket{\phi}$. For any integer $m\ge 1$, the $m$-copy entanglement-assisted Hadamard phase estimation circuit with reference phase $\vtheta$ and parity outcome $Y=(-1)^{|z|}$ gives
    \begin{equation*}
        \bbE[Y]=\sin(m(\theta-\vtheta)).
    \end{equation*}
\end{lemma}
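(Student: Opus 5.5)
I will prove the identity by tracking the joint state through the circuit of Fig.~\ref{fig:ghz-hadamard-general} and then computing the parity expectation. The case $m=1$ of the same computation covers the standard Hadamard test of Fig.~\ref{fig:hadamard-test}.

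\textbf{Step 1 (state before the final Hadamards).} Start from $\ket{\GHZ_m}\otimes\ket{\phi}^{\otimes m}$ and apply the $m$ parallel controlled-$U$ gates. Because $U\ket{\phi}=e^{\ri\theta}\ket{\phi}$, the branch $\ket{1}^{\otimes m}$ acquires the phase $e^{\ri m\theta}$ and the branch $\ket{0}^{\otimes m}$ is unchanged. The system register therefore factors out, and the ancilla is in $\frac{1}{\sqrt2}(\ket{0}^{\otimes m}+e^{\ri m\theta}\ket{1}^{\otimes m})$. Next apply $\Phi_m(\vtheta)$, whose action is defined in the text. This gives
\begin{equation*}
\frac{1}{\sqrt2}\left(\ket{0}^{\otimes m}+e^{\ri\alpha}\ket{1}^{\otimes m}\right),\qquad \alpha:=m(\theta-\vtheta)-\tfrac{\pi}{2}.
\end{equation*}

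\textbf{Step 2 (parity after the Hadamards).} Measuring every qubit in the computational basis after $H^{\otimes m}$ and forming $(-1)^{|z|}$ is the same as measuring the observable $X^{\otimes m}$ on the state before the Hadamards. This follows from $H Z H = X$ and $Y=\prod_j(-1)^{z_j}$. So $\bbE[Y]=\bra{\Psi}X^{\otimes m}\ket{\Psi}$. The operator $X^{\otimes m}$ swaps $\ket{0}^{\otimes m}$ and $\ket{1}^{\otimes m}$, so only the cross terms contribute. We get
\begin{equation*}
\bbE[Y]=\tfrac12\left(e^{\ri\alpha}+e^{-\ri\alpha}\right)=\cos\alpha=\sin(m(\theta-\vtheta)).
\end{equation*}

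As an alternative check, one can expand $H^{\otimes m}\ket{0}^{\otimes m}$ and $H^{\otimes m}\ket{1}^{\otimes m}$ directly. The amplitude of outcome $z$ is $2^{-(m+1)/2}(1+(-1)^{|z|}e^{\ri\alpha})$, so the probability is $2^{-m}(1+(-1)^{|z|}\cos\alpha)$. Summing $(-1)^{|z|}$ against these probabilities gives the same result, since $\sum_z 1=2^m$.

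\textbf{Main obstacle.} The only delicate point is bookkeeping the convention of the phase gate, namely the factor $-\ri$ in $\Phi_m(\vtheta)$. This factor is what turns the natural $\cos$ into $\sin$, and a sign slip would give $-\sin$. I will check it explicitly at $m=1$ with $\theta=\vtheta+\pi/2$, where the outcome should be $Y=+1$ deterministically.
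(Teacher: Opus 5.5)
Your proof is correct, and your ``alternative check'' is exactly the paper's argument. The paper expands $H^{\otimes m}\ket{0}^{\otimes m}$ and $H^{\otimes m}\ket{1}^{\otimes m}$, obtains $p(z)=2^{-m}\bigl(1+(-1)^{|z|}\sin(m(\theta-\vtheta))\bigr)$, and averages the parity. In that check you should state that $\sum_z(-1)^{|z|}=0$ for $m\ge1$, not only that $\sum_z 1=2^m$.

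Your main route is different: it works in the Heisenberg picture. You conjugate the parity observable $Z^{\otimes m}$ by $H^{\otimes m}$ to get $X^{\otimes m}$, and then read $\bbE[Y]$ off the single coherence between $\ket{0}^{\otimes m}$ and $\ket{1}^{\otimes m}$.

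What each approach buys:
\begin{itemize}
\item The paper's computation yields the full distribution over $z$. It shows that this distribution depends on $z$ only through the parity of $|z|$, although for a $\pm1$-valued $Y$ the mean already determines its law.
\item Your argument is shorter and avoids the $2^m$-term expansion entirely.
\item Your argument also transfers verbatim to the imperfect-state parity lemma, where the system register no longer factors out. With observable $X^{\otimes m}\otimes I$, the cross term gives $\bbE[Y]=\mathrm{Re}\bigl[(-\ri)e^{-\ri m\vtheta}(\bra{\psi}U\ket{\psi})^m\bigr]$ in one line. The paper's $p(z)$ route instead requires computing the norm of a superposition of non-parallel system vectors.
\end{itemize}

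Your sanity check at $m=1$, $\theta=\vtheta+\pi/2$ correctly confirms the sign convention of the $-\ri$ factor.
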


\begin{proof}
    Prepare $\ket{\GHZ_m}$ on the ancillas and $\ket{\phi}^{\otimes m}$
    on the system registers.  The initial state is
    \begin{equation*}
        \ket{\Phi_0}
        =
        \frac{1}{\sqrt{2}}
        \left(\ket{0}^{\otimes m}+\ket{1}^{\otimes m}\right)
        \ket{\phi}^{\otimes m}.
    \end{equation*}
    After the parallel controlled-$U$ layer, the eigenvalue equation
    $U\ket{\phi}=e^{\ri\theta}\ket{\phi}$ gives
    \begin{equation*}
        \ket{\Phi_1}= \frac{1}{\sqrt{2}}
        \left(\ket{0}^{\otimes m} + e^{\ri m\theta}\ket{1}^{\otimes m}\right) \ket{\phi}^{\otimes m}.
    \end{equation*}
    The reference phase block maps $\ket{1}^{\otimes m}\rightarrow -\ri
    e^{-\ri m\vtheta}\ket{1}^{\otimes m}$, so the state before the final
    Hadamards is
    \begin{equation*}
        \ket{\Phi_2}
        =
        \frac{1}{\sqrt{2}}
        \left(
            \ket{0}^{\otimes m}
            +
            (-\ri)e^{\ri m(\theta-\vtheta)}
            \ket{1}^{\otimes m}
        \right)
        \ket{\phi}^{\otimes m}.
    \end{equation*}
    We use the standard identities
    \begin{equation} \label{eq:multi_hadamard_relation}
        H^{\otimes m}\ket{0}^{\otimes m}
        =
        \frac{1}{\sqrt{2^m}}\sum_{z\in\{0,1\}^m}\ket{z},
        \qquad
        H^{\otimes m}\ket{1}^{\otimes m}
        =
        \frac{1}{\sqrt{2^m}}\sum_{z\in\{0,1\}^m}(-1)^{|z|}\ket{z}.
    \end{equation}
    Applying the final Hadamards to the ancillas therefore gives
    \begin{equation*}
        \ket{\Phi_3}
        =
        \frac{1}{\sqrt{2^{m+1}}}
        \sum_{z\in\{0,1\}^m}
        \left(
            1+(-1)^{|z|}(-\ri)e^{\ri m(\theta-\vtheta)}
        \right)
        \ket{z}\ket{\phi}^{\otimes m}.
    \end{equation*}
    Thus the probability of observing a particular ancilla string $z$ is
    \begin{align*}
        p(z)
        &=
        \left\|
            \frac{1}{\sqrt{2^{m+1}}}
            \left(
                1+(-1)^{|z|}(-\ri)e^{\ri m(\theta-\vtheta)}
            \right)
            \ket{\phi}^{\otimes m}
        \right\|^2 \\
        &=
        \frac{1}{2^{m+1}}
        \left|
            1+(-1)^{|z|}(-\ri)e^{\ri m(\theta-\vtheta)}
        \right|^2 \\
        &=
        \frac{1}{2^{m}}
        \left(
            1+(-1)^{|z|}\sin(m(\theta-\vtheta))
        \right).
    \end{align*}
    For the parity random variable $Y=(-1)^{|z|}$,
    \begin{align*}
        \bbE[Y]
        &=
        \sum_{z\in\{0,1\}^m}(-1)^{|z|}p(z) \\
        &=
        \frac{1}{2^{m}}\sum_{z}
        \left[
            (-1)^{|z|}
            +
            \left((-1)^{|z|}\right)^2
            \sin(m(\theta-\vtheta))
        \right]
        =
        \sin(m(\theta-\vtheta)),
    \end{align*}
    because exactly half of all bit strings have even parity and half
    have odd parity.
\end{proof}

For $m=1$, the ancilla state reduces to $\ket{+}=(\ket{0}+\ket{1})/\sqrt{2}$, and the parity outcome is simply the single-qubit measurement outcome $X=(-1)^b$. The circuit therefore reduces to the standard Hadamard test, giving $\bbE[X]=\sin(\theta-\vtheta)$.


\section{Shot-count bounds for standard and entanglement-assisted Hadamard tests}
\label{app:shot_count_bounds}

We first establish a general inverse-sine shot-count bound, then apply it to prove the bounds for the standard and entanglement-assisted Hadamard tests in Lemmas~\ref{lem:standard_hadamard_shot_count} and~\ref{lem:parallel_hadamard_shot_count}, respectively.

\subsection{General inverse-sine shot-count bound}
\label{app:optimized_inversion_radius}

We now record the inverse-sine shot count bound used throughout the paper. To motivate the choice below, first consider the unamplified case $m=1$, with signal mean $\sin\theta$ and $|\theta|\le\Delta$. We seek a tolerance $r$ such that, whenever the sine estimate deviates by at most $r$, the resulting phase error is at most $\epsilon$.
On the interval $[-s,s]$, the function $\arcsin$ is $(1-s^2)^{-1/2}$-Lipschitz. If $|\hS-\sin\theta|\le r$, then
\begin{equation*}
    |\hS|
    \le
    \sin\Delta+r.
\end{equation*}
Thus, requiring $r\le s-\sin\Delta$ keeps $\hS$ inside $[-s,s]$. If, in addition, $r\le\epsilon\sqrt{1-s^2}$, then
\begin{equation*}
    |\htheta-\theta|
    =
    |\arcsin(\hS)-\arcsin(\sin\theta)|
    \le
    \frac{|\hS-\sin\theta|}{\sqrt{1-s^2}}
    \le
    \epsilon.
\end{equation*}
Hence, for fixed $s$, the largest admissible tolerance is
\begin{equation*}
    r(s)
    =
    \min\left\{s-\sin\Delta,\ \epsilon\sqrt{1-s^2}\right\}.
\end{equation*}
The first term keeps the estimator on the branch where $\arcsin$ is well defined, while the second term guarantees the desired phase accuracy. Optimizing over $s$ balances these two constraints, and the resulting optimal tolerance is exactly the value $r$ defined in the following lemma, after substituting $\Delta\to m\Delta$ and $\epsilon\to q/m$.

\begin{lemma}[Inverse-sine shot count]
    \label{lem:app_optimized_inverse_sine_shot_count}
    Let $m \ge 1$ be an integer amplification factor, and let $\Delta
    \ge 0$ be a local error bound. Fix a branch margin
    $\chi\in(0,\pi/2)$ and assume $m\Delta\le\chi$. Let $X \in \{\pm1\}$
    be a random variable satisfying
    \begin{equation*}
        \bbE[X] = \mu,\qquad |\mu|\le\sin(m\Delta).
    \end{equation*}
    Let $\hS=N^{-1}\sum_{j=1}^N X^{(j)}$ be the empirical mean of $N$
    shots. For a phase-accuracy target $\epsilon>0$, set $q:=m\epsilon$
    and define
    \begin{equation*}
        r := \frac{q\left(\sqrt{\cos^2(m\Delta)+q^2}-q\sin(m\Delta)\right)}{1+q^2}.
    \end{equation*}
    If the number of samples satisfies
    \begin{equation*}
        N\ge\frac{2}{r^2}\ln\frac{2}{p_{\fail}},
    \end{equation*}
    then
    \begin{equation*}
        \bbP\!\left(\frac{1}{m}\left|\arcsin(\hS)-\arcsin(\mu)\right|>\epsilon\right) \le p_{\fail}.
    \end{equation*}
    As $q\to0$,
    \begin{equation*}
        r=q\cos(m\Delta) - q^2\sin(m\Delta) + \bigO(q^3).
    \end{equation*}
\end{lemma}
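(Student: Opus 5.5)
The plan has three parts: a concentration bound, a deterministic Lipschitz inversion step, and an algebraic optimization that recovers the stated $r$. First, Hoeffding's inequality for the $N$ i.i.d.\ variables $X^{(j)}\in\{\pm1\}$ (range length $2$) gives
\begin{equation*}
\bbP\bigl(|\hS-\mu|>r\bigr)\le 2\exp\!\left(-\tfrac{N r^2}{2}\right).
\end{equation*}
This is at most $p_{\fail}$ once $N\ge \frac{2}{r^2}\ln\frac{2}{p_{\fail}}$. So it suffices to show that, deterministically, $|\hS-\mu|\le r$ implies $|\arcsin(\hS)-\arcsin(\mu)|\le q = m\epsilon$.

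For the deterministic step, I will use the motivating argument preceding the lemma, with $\Delta\to m\Delta$ and $\epsilon\to q$. Fix $s\in[\sin(m\Delta),1)$ satisfying $s-\sin(m\Delta)\ge r$ and $q\sqrt{1-s^2}\ge r$. Since $|\mu|\le \sin(m\Delta)$, we get $|\hS|\le \sin(m\Delta)+r\le s$, so both $\hS$ and $\mu$ lie in $[-s,s]$. On that interval $\arcsin$ is $(1-s^2)^{-1/2}$-Lipschitz, hence
\begin{equation*}
|\arcsin\hS-\arcsin\mu|\le \frac{r}{\sqrt{1-s^2}}\le q .
\end{equation*}
Dividing by $m$ gives the claim.

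The main step is exhibiting such an $s$ for the stated $r$, i.e.\ showing that $r$ equals $\max_s\min\{s-\sin(m\Delta),\,q\sqrt{1-s^2}\}$. Write $a=\sin(m\Delta)$ and $c=\cos(m\Delta)$. The first function increases in $s$ and the second decreases, so the optimum is at their crossing $s-a=q\sqrt{1-s^2}$. I will parametrize $s=\sin\alpha$, so the crossing reads $\sin\alpha-a=q\cos\alpha$. Squaring gives the quadratic $(1+q^2)s^2-2as+a^2-q^2=0$. Its root with $s\ge a$ is
\begin{equation*}
s^\star=\frac{a+q\sqrt{1+q^2-a^2}}{1+q^2}=\frac{a+q\sqrt{c^2+q^2}}{1+q^2}.
\end{equation*}
Then $r=s^\star-a=\dfrac{q\sqrt{c^2+q^2}-q^2a}{1+q^2}$, which matches the formula in the statement. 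I still need to check two things: that $s^\star<1$, which follows because $s^\star-a=q\sqrt{1-s^{\star2}}$ and $q>0$ require $1-s^{\star2}>0$ for a genuine crossing; and that the squared equation introduces no spurious root. For the latter, take $s^\star$ as the larger root and verify $s^\star-a\ge 0$ directly, which holds since $q\sqrt{c^2+q^2}\ge q^2 a$. In practice I do not need to prove optimality, only that $s=s^\star$ satisfies both constraints with equality, which the computation gives.

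Finally, the expansion as $q\to0$ comes from $\sqrt{c^2+q^2}=c+\bigO(q^2)$ (for $c\ge\cos\chi>0$) and $(1+q^2)^{-1}=1-q^2+\bigO(q^4)$. This yields $r=qc-q^2a+\bigO(q^3)$.
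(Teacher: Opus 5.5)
Your proposal is correct and follows essentially the same route as the paper. Both take $s$ as the crossing point of $s-a=q\sqrt{1-s^2}$, set $r=s-a$, use the $(1-s^2)^{-1/2}$-Lipschitz property of $\arcsin$ on $[-s,s]$ to get the event inclusion, apply Hoeffding in the form $2\exp(-Nr^2/2)$, and Taylor-expand. Your root check is slightly circular as phrased, but it closes cleanly: $c>0$ gives the strict inequality $r=s^\star-a>0$, so the squared equation yields $q^2(1-s^{\star2})=r^2>0$, hence $s^\star<1$ and $s^\star$ solves the unsquared equation.
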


\begin{proof}
    Write
    \begin{equation*}
        a:=\sin(m\Delta),\qquad
        c:=\cos(m\Delta)>0 .
    \end{equation*}
    Let $s\in(a,1)$ be the unique solution of
    \begin{equation*}
        s-a=q\sqrt{1-s^2}.
    \end{equation*}
    Solving this quadratic equation gives
    \begin{equation*}
        s = \frac{a+q\sqrt{c^2+q^2}}{1+q^2},
    \end{equation*}
    and direct substitution shows that the value
    \begin{equation*}
        r:=s-a=q\sqrt{1-s^2}
    \end{equation*}
    equals the expression stated in the lemma. On the event
    $|\hS-\mu|\le r$, we have
    \begin{equation*}
        |\hS| \le |\mu|+|\hS-\mu| \le a+s-a = s,
    \end{equation*}
    so $\mu,\hS\in[-s,s]$. Since $\arcsin(x)$ has derivative
    $(1-x^2)^{-1/2}$, it is $(1-s^2)^{-1/2}$-Lipschitz on the interval $[-s, s]$.
    Hence
    \begin{equation*}
        \frac{1}{m}
        \left|\arcsin(\hS)-\arcsin(\mu)\right|
        \le
        \frac{|\hS-\mu|}{m\sqrt{1-s^2}}
        \le
        \frac{r}{m\sqrt{1-s^2}}
        =
        \frac{q}{m}
        =
        \epsilon .
    \end{equation*}
    Therefore
    \begin{equation*}
        \left\{\frac{1}{m} \left|\arcsin(\hS)-\arcsin(\mu)\right| > \epsilon \right\} \subseteq \{|\hS-\mu|>r\}.
    \end{equation*}
    Hoeffding's inequality~\cite{hoeffding1963probability} for averages of $\{\pm1\}$-valued random
    variables gives
    \begin{equation*}
        \bbP\!\left(|\hS-\mu|>r\right) \le 2\exp\!\left(-\frac{N r^2}{2}\right) \le p_{\fail}.
    \end{equation*}

    Finally, since $m\Delta\le\chi<\pi/2$, we have $c\ge\cos\chi>0$.
    Hence the Taylor expansions
    \begin{equation*}
        \sqrt{c^2+q^2}=c+\frac{q^2}{2c}+\bigO (q^4), \qquad
        \frac{1}{1+q^2}=1+\bigO(q^2).
    \end{equation*}
    Substituting these expansions into the expression for $r$ gives
    \begin{equation*}
        r = q c-q^2 a+\bigO(q^3) = q\cos(m\Delta)-q^2\sin(m\Delta) +\bigO(q^3).
    \end{equation*}
\end{proof}

\subsection{Standard Hadamard test}\label{app:proof_of_std_hadamard}

\noindent \textbf{Lemma~~\ref{lem:standard_hadamard_shot_count} (Shot count for the SHT)} Let $\ket{\phi}$ be an eigenstate of a unitary $U$ with $U\ket{\phi}=e^{\ri\theta}\ket{\phi}$, and let $\vtheta$ be a reference phase satisfying $|\theta-\vtheta|\le\Delta$ for some constant $\Delta<\pi/2$. To estimate $\theta$ via the standard Hadamard test within additive error $\epsilon$ and with success probability at least $1-p_{\fail}$, it suffices to take a shot count
\begin{equation*}
    N_{\std}
    =
    \left\lceil\frac{2}{r_{\std}^2}\,\ln\frac{2}{p_{\fail}}\right\rceil,
\end{equation*}
where $r_{\rm std} = \epsilon\left(\sqrt{\cos^2\Delta+\epsilon^2} -\epsilon\sin\Delta\right)/(1+\epsilon^2)$.

\begin{proof}
    Apply Lemma \ref{lem:app_optimized_inverse_sine_shot_count} with $m=1$
    and $\mu=\sin(\theta-\vtheta)$. Since $\Delta<\pi/2$, the local
    branch satisfies $\arcsin(\mu)=\theta-\vtheta$, so the inverse-sine
    error in the lemma is exactly $|\htheta-\theta|$.
\end{proof}
\subsection{Entanglement-assisted Hadamard test}\label{app:proof_of_parallel_hadamard_shot_count}

\noindent \textbf{Lemma~\ref{lem:parallel_hadamard_shot_count} (Shot count for the EHT)}
Let $\ket{\phi}$ be an eigenstate of a unitary $U$ with $U\ket{\phi}=e^{\ri\theta}\ket{\phi}$, and let $\vtheta$ be a reference phase satisfying $|\theta-\vtheta|\le \Delta$ for some constant $\Delta<\frac{\pi}{2m}$. Fix a branch margin $\chi<\pi/2$ and choose an integer amplification $m$ such that $m\Delta\le\chi$. To estimate the phase $\theta$ via the $m$-qubit GHZ state EHT within additive error $\epsilon$ and with success probability at least $1-p_{\fail}$, it suffices to take a shot count
\begin{equation*}
    N_{\rm EHT}=\left\lceil\frac{2}{r_{\rm EHT}^2}\ln\frac{2}{p_{\fail}}\right\rceil,
\end{equation*}
where 
\begin{align*}
    r_{\rm EHT} = \frac{m\epsilon\left(\sqrt{\cos^2(m\Delta)+m^2\epsilon^2}-m\epsilon\sin(m\Delta)\right)}{1+m^2\epsilon^2}.
\end{align*}

\begin{proof}
    Apply Lemma \ref{lem:app_optimized_inverse_sine_shot_count} with $\mu=\sin(m(\theta-\vtheta))$. Since $|m(\theta-\vtheta)|\le\chi<\pi/2$, the local branch satisfies $\arcsin(\mu)=m(\theta-\vtheta)$, so the inverse-sine error in the
    lemma gives the stated phase-error bound after division by $m$.
\end{proof}

\section{Upper bound of preparation error}
\label{app:upper_bound_of_preparation_error}

\begin{lemma}
    \label{lem:target_rotated_overlap_bounds}
    Assume the infidelity $\eta<1/2$. Define
    \begin{equation*}
        \xi
        :=
        e^{-\ri\theta_{\tar}}\bra{\psi}U\ket{\psi}.
    \end{equation*}
    Then
    \begin{equation*}
        |\xi| \ge 1-2\eta,
        \qquad
        |\arg(\xi)| \le \arcsin\!\left(\frac{\eta}{1-\eta}\right),
    \end{equation*}
    where the argument is taken on the local branch containing the positive
    real axis. 
\end{lemma}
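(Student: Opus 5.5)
We expand $\xi$ using the decomposition $\ket{\psi}=\sqrt{1-\eta}\ket{\phi_{\tar}}+\sqrt{\eta}\ket{\phi_{\tar}^{\perp}}$. Since $U\ket{\phi_{\tar}}=e^{\ri\theta_{\tar}}\ket{\phi_{\tar}}$ and $U$ is unitary, $U\ket{\phi_{\tar}^\perp}$ is again orthogonal to $\ket{\phi_{\tar}}$: indeed $\bra{\phi_{\tar}}U\ket{\phi_{\tar}^\perp}=\overline{\bra{\phi_{\tar}^\perp}U^\dagger\ket{\phi_{\tar}}}=e^{\ri\theta_{\tar}}\braket{\phi_{\tar}}{\phi_{\tar}^\perp}=0$. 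Hence both cross terms vanish and
\begin{equation*}
    \xi=(1-\eta)+\eta\, w,\qquad w:=e^{-\ri\theta_{\tar}}\bra{\phi_{\tar}^{\perp}}U\ket{\phi_{\tar}^{\perp}},
\end{equation*}
with $|w|\le 1$ by Cauchy--Schwarz and unitarity. Thus $\xi$ lies in the closed disk of radius $\eta$ centered at the real point $1-\eta$.

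The modulus bound follows from the reverse triangle inequality: $|\xi|\ge(1-\eta)-\eta|w|\ge 1-2\eta$. Since $\eta<1/2$, this is strictly positive, so the disk does not contain the origin.

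For the argument bound, we use the elementary geometry of a disk not containing the origin. Every point of the disk has argument at most the angle $\alpha$ of the tangent line from $0$ to the circle, where $\sin\alpha=\text{radius}/\text{distance of center}=\eta/(1-\eta)<1$. To make this concrete, write $\xi=x+\ri y$ with $x\ge 1-2\eta>0$, so $\xi$ lies in the open right half-plane and the principal argument is the local branch. Then minimize $x^2+y^2$ subject to $|\arg\xi|=\phi$, or equivalently show that the distance from the center $1-\eta$ to the ray at angle $\phi$ equals $(1-\eta)\sin\phi$. This distance must be at most $\eta$, giving $\sin|\arg\xi|\le \eta/(1-\eta)$. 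Because $|\arg\xi|<\pi/2$, we may apply $\arcsin$ to conclude.

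The only delicate point is the tangent-angle step, specifically ensuring that the distance-to-ray formula applies. This is valid because the foot of the perpendicular lies on the ray: its projection $(1-\eta)\cos\phi$ is positive when $\phi<\pi/2$. The whole argument rests on the vanishing of the cross terms, which reduces everything to planar geometry.
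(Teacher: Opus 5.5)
Your proof is correct and follows essentially the same route as the paper. You write $\xi=(1-\eta)+\eta w$ with vanishing cross terms, place $\xi$ in the disk of radius $\eta$ about $1-\eta$, get $|\xi|\ge 1-2\eta$ from the reverse triangle inequality, and bound the argument by the tangent-ray angle with $\sin\alpha=\eta/(1-\eta)$. Your explicit check that $\bra{\phi_{\tar}}U\ket{\phi_{\tar}^{\perp}}=0$ via unitarity, and your distance-to-line computation, make rigorous what the paper only asserts. The foot-of-perpendicular check is not needed, since the distance from $1-\eta$ to the line through $0$ and $\xi$ is automatically at most $|\xi-(1-\eta)|\le\eta$.
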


\begin{proof}
    Since $\ket{\phi_{\tar}^{\perp}}$ is orthogonal to the target
    eigenstate and $U\ket{\phi_{\tar}}=e^{\ri\theta_{\tar}}
    \ket{\phi_{\tar}}$, the cross terms vanish and
    \begin{align*}
        \xi
        =
        e^{-\ri\theta_{\tar}}
        \bra{\psi}U\ket{\psi}
        =
        (1-\eta)
        +
        \eta\, 
        e^{-\ri\theta_{\tar}}
        \bra{\phi_{\tar}^{\perp}}U
        \ket{\phi_{\tar}^{\perp}} .
    \end{align*}
    Thus $\xi=(1-\eta)+\eta\zeta$ with
    \begin{equation*}
        \zeta
        :=
        e^{-\ri\theta_{\tar}}
        \bra{\phi_{\tar}^{\perp}}U
        \ket{\phi_{\tar}^{\perp}},
        \qquad
        |\zeta|\le1,
    \end{equation*}
    because $U$ is unitary and
    $\ket{\phi_{\tar}^{\perp}}$ is normalized. Therefore $\xi$ lies
    in the closed disk of radius $\eta$ centered at $1-\eta$ on the real
    axis. The reverse triangle inequality gives
    \begin{equation*}
        |\xi|
        \ge
        (1-\eta)-\eta
        =
        1-2\eta.
    \end{equation*}
    Since $\eta<1/2$, this disk does not contain the origin. The largest
    possible absolute argument is achieved by a ray from the origin
    tangent to the disk. In the resulting right triangle, the opposite
    side has length $\eta$ and the hypotenuse from the origin to the disk
    center has length $1-\eta$. Hence
    \begin{equation*}
        |\sin(\arg \xi)|
        \le
        \frac{\eta}{1-\eta}.
    \end{equation*}
    The disk lies in the right half-plane, so
    $|\arg \xi|<\pi/2$, and the claimed argument bound follows by
    applying $\arcsin$.
\end{proof}
\section{Expectation value of entanglement-assisted Hadamard test with imperfect state preparation}\label{app:proof_of_parallel_hadamard_shot_count_imperfect}

\begin{lemma}[Measured parity for imperfect states]
\label{lem:parity}
For the parallel $m$-copy circuit, the measured ancilla parity
$Y=(-1)^{|z|}$ satisfies
\begin{equation}
  \mathbb{E}[Y]=\rho^{m}\sin\!\big(m(\theta_\psi-\vartheta)\big).
\end{equation}
\end{lemma}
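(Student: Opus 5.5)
The plan is to redo the computation of Lemma~\ref{lem:hadamard_expectation_values}, but without ever using the eigenvalue equation. Instead, I keep the system register as a general (unnormalized-in-branch) vector and compute the measurement probabilities through an inner product. The initial state is $\frac{1}{\sqrt2}(\ket{0}^{\otimes m}+\ket{1}^{\otimes m})\ket{\psi}^{\otimes m}$.

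\textbf{Step 1 (evolution).} The parallel controlled-$U$ layer acts trivially on the $\ket{0}^{\otimes m}$ branch. On the $\ket{1}^{\otimes m}$ branch it applies $U^{\otimes m}$. The phase block then multiplies that branch by $-\ri e^{-\ri m\vtheta}$. The pre-measurement state is therefore
\begin{equation*}
\frac{1}{\sqrt2}\left(\ket{0}^{\otimes m}\ket{\psi}^{\otimes m}-\ri e^{-\ri m\vtheta}\ket{1}^{\otimes m}U^{\otimes m}\ket{\psi}^{\otimes m}\right).
\end{equation*}
Applying the identities in Eq.~\eqref{eq:multi_hadamard_relation} gives, for each $z$, the system-register vector
\begin{equation*}
\frac{1}{\sqrt{2^{m+1}}}\left(\ket{\psi}^{\otimes m}+(-1)^{|z|}(-\ri)e^{-\ri m\vtheta}U^{\otimes m}\ket{\psi}^{\otimes m}\right).
\end{equation*}

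\textbf{Step 2 (probabilities).} Compute $p(z)$ as the squared norm of this vector. The two diagonal terms each contribute $1$, since $U$ is unitary and $\ket\psi$ is normalized. The cross term is
\begin{equation*}
2\,\mathrm{Re}\left[(-1)^{|z|}(-\ri)e^{-\ri m\vtheta}\bra{\psi}^{\otimes m}U^{\otimes m}\ket{\psi}^{\otimes m}\right].
\end{equation*}
The key observation is the factorization
\begin{equation*}
\bra{\psi}^{\otimes m}U^{\otimes m}\ket{\psi}^{\otimes m}=(\bra\psi U\ket\psi)^m=\rho^m e^{\ri m\theta_\psi}.
\end{equation*}
Using $\mathrm{Re}[-\ri e^{\ri x}]=\sin x$, this gives
\begin{equation*}
p(z)=2^{-m}\left(1+(-1)^{|z|}\rho^m\sin(m(\theta_\psi-\vtheta))\right).
\end{equation*}

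\textbf{Step 3 (parity average).} Sum $(-1)^{|z|}p(z)$ over $z$ exactly as in Lemma~\ref{lem:hadamard_expectation_values}. The constant term cancels because half of the strings have even parity and half odd. What remains is $\rho^m\sin(m(\theta_\psi-\vtheta))$.

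\textbf{Main obstacle.} Everything is routine except keeping the system register as a vector rather than a scalar phase. The factorization of the tensor-power overlap relies on the $m$ copies being prepared independently as a product state $\ket\psi^{\otimes m}$, and I would state this assumption explicitly.
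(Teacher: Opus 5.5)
Your proposal is correct and follows the paper's proof essentially step for step. Like the paper, you evolve the state without using the eigenvalue equation, expand the squared norm of the branch vector so that the cross term involves $(\bra{\psi}U\ket{\psi})^m=\rho^m e^{\ri m\theta_\psi}$, and then average the parity. Your explicit note that this factorization requires the product input $\ket{\psi}^{\otimes m}$ states an assumption the paper uses only implicitly.
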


\begin{proof}
    For the parallel circuit, the state immediately after the controlled
    applications of $U$ and the reference block is
    \begin{equation*}
        \frac{1}{\sqrt{2}}
        \left(
            \ket{0}^{\otimes m}\ket{\psi}^{\otimes m}
            +
            (-\ri)e^{-\ri m\vtheta}
            \ket{1}^{\otimes m}(U\ket{\psi})^{\otimes m}
        \right).
    \end{equation*}
    Applying $H^{\otimes m}$ to the ancillas and using
    \eqref{eq:multi_hadamard_relation}, the probability of observing
    $z\in\{0,1\}^m$ is
    \begin{align*}
        p(z)
        &=
        \frac{1}{2^m}
        \left(
            1
            +
            (-1)^{|z|}
            \mathrm{Re}\!\left(
                (-\ri)e^{-\ri m\vtheta}
                (\bra{\psi}U\ket{\psi})^m
            \right)
        \right) \\
        &=
        \frac{1}{2^m}
        \left(
            1+(-1)^{|z|}\rho^m
            \sin\bigl(m(\theta_{\psi}-\vtheta)\bigr)
        \right).
    \end{align*}
    Averaging $Y=(-1)^{|z|}$ over this distribution gives the stated
    expectation. 
\end{proof}
\section{Amplitude-induced bias error bound}\label{app:proof_of_amplitude_error_bound}

\noindent \textbf{Theorem~\ref{thm:single_round_imperfect} (Amplitude bias error bound)}
Let $U$ be a unitary and let $\ket{\psi}$ be a prepared state with $\langle\psi|U|\psi\rangle=\rho e^{i\theta_\psi}$, where $\theta_\psi$ is the effective phase and $\rho=\lvert\langle\psi|U|\psi\rangle\rvert$ is the signal amplitude. Assume that $\rho$ is lower bounded by a known constant $\rho_0\in[0,1]$, i.e., $\rho\ge\rho_0$. Let $\vtheta$ be a reference phase satisfying $|\theta_\psi-\vtheta|\le\Delta$ for a reference-error bound $\Delta>0$, and choose an integer amplification $m$ and a branch margin $\chi\in(0,\pi/2)$ such that $m\Delta\le\chi$. Then the amplitude-induced bias $\varepsilon_{\mathrm{bias}}$ satisfies
\begin{equation}
    |\varepsilon_{\mathrm{bias}}|
    \le
    \frac{1-\rho_0^{m}}{m}\tan(m\Delta).
\end{equation}

\begin{proof}
    Let $\beta:=m(\theta_\psi-\vtheta)$. The branch condition gives
    $|\beta|\le m\Delta\le\chi<\pi/2$, so
    $\arcsin(\sin\beta)=\beta$.
    By the definition of the infinite-shot limit,
    \begin{equation*}
        \theta_{\psi,\infty}
        =\vtheta+\frac{1}{m}\arcsin(\rho^m\sin\beta).
    \end{equation*}
    Therefore, the amplitude-induced bias is
    \begin{align*}
        \varepsilon_\bias
        =|\theta_{\psi,\infty}-\theta_\psi|
        =\frac{1}{m}
        \left|\arcsin(\rho^m\sin\beta) - \arcsin(\sin\beta)\right|.
    \end{align*}
    Note that
    \begin{equation*}
        \arcsin(\rho^m\sin\beta) - \arcsin(\sin\beta) =    \int_{\sin\beta}^{\rho^m\sin\beta} \frac{\mathrm{d}u}{\sqrt{1-u^2}}.
    \end{equation*}
    The integration interval has
    length $(1-\rho^m)|\sin\beta|$, and every point in it has absolute
    value at most $\sin(m\Delta)$. Therefore
    \begin{equation*}
        |\veps_{\bias}|
        \le
        \frac{(1-\rho^m)|\sin\beta|}
        {m\sqrt{1-\sin^2(m\Delta)}}
        \le
        \frac{(1-\rho^m)\sin(m\Delta)}
        {m\cos(m\Delta)}
        =
        \frac{1-\rho^m}{m}\tan(m\Delta).
    \end{equation*}
    Since $\rho\ge\rho_0$, we have $1-\rho^m\le 1-\rho_0^m$, and the
    claimed bound
    \begin{equation*}
        |\veps_{\bias}| \le \frac{1-\rho_0^m}{m}\tan(m\Delta)
    \end{equation*}
    follows.
\end{proof}


\section{Amplification selection}
\label{app:Optimal_amplification_with_imperfect_state_preparation}

We state an amplification-selection rule that minimizes the certified device-restart count in each round for both exact and imperfect eigenstate preparation. Let $\Delta>0$ be the current reference-error bound, and let $\epsilon_{\stat}>0$ and $\epsilon_{\bias}\ge0$ be the statistical and bias budgets. For an amplitude lower bound $\rho_0\in[0,1]$, recall the amplitude-bias bound
\begin{equation*}
    B(m;\Delta,\rho_0)
    :=
    \frac{1-\rho_0^m}{m}\tan(m\Delta)
\end{equation*}
and define the feasible set
\begin{equation*}
    \mathcal M_{\mathrm{feas}}
    :=
    \left\{
        m\in\{1,\ldots,m_{\hw}\}:
        m\Delta\le\chi,\quad
        B(m;\Delta,\rho_0)\le\epsilon_{\bias}
    \right\}.
\end{equation*}
For exact eigenstate preparation, set $\rho_0=1$, so that $B(m;\Delta,1)=0$ for every admissible amplification. The entire round-wise accuracy budget can then be assigned to statistical error: $\epsilon_{\bias}=0$ and $\epsilon_{\stat}=\epsilon_t$ (equivalently, $c_{\bias}=0$ and $c_{\stat}=\gamma$ when $\epsilon_t=\gamma\Delta$). The feasible set reduces to
\begin{equation*}
    \mathcal M_{\mathrm{feas}}
    =\{1,\ldots,m_{\mathrm{search}}\},
    \qquad
    m_{\mathrm{search}}
    :=\min\{m_{\hw},\lfloor\chi/\Delta\rfloor\},
\end{equation*}
provided that $m_{\mathrm{search}}\ge1$. Thus only the branch and hardware constraints remain. For imperfect preparation, the bias constraint additionally restricts this set.

For each $m\in\mathcal M_{\mathrm{feas}}$, let $r_m$ be the inverse-sine value from Appendix~\ref{app:optimized_inversion_radius}, evaluated at statistical accuracy $\epsilon_{\stat}$. For a per-round failure probability $p_{\mathrm{round}}\in(0,1)$, define the certified shot and restart counts
\begin{equation*}
    N_m
    :=
    \left\lceil
        \frac{2}{r_m^2}\ln\frac{2}{p_{\mathrm{round}}}
    \right\rceil,
    \qquad
    R_m
    :=
    \left\lceil
        \frac{N_m}{\kappa(m)}
    \right\rceil,
\end{equation*}
where $\kappa(m)$ is the packing capacity. We choose the feasible amplification with the smallest packed restart count $R_m$. If several amplifications have the same restart count, we choose the one with smaller amplification. Without the integer ceilings, this restart objective is equivalent to maximizing $\kappa(m)r_m^2$ rather than $r_m$ alone.

Within the range allowed by the branch condition and hardware cap, the bias bound $B(m;\Delta,\rho_0)$ is nondecreasing in $m$ for $m\Delta<\pi/2$. Consequently, if an amplification $m$ satisfies the bias constraint, then every smaller positive amplification also satisfies it. The feasible candidates therefore have the form $\{1,\ldots,m_{\mathrm{feas}}\}$, and the largest bias-feasible value $m_{\mathrm{feas}}$ can be found by binary search. The restart cost itself need not be monotone because of inverse-sine conditioning and integer packing, so all these feasible candidates must still be compared. The resulting selection rule is summarized in Algorithm~\ref{alg:min_restart_amplification}.

\begin{algorithm}[htbp]
\renewcommand{\algorithmicrequire}{\textbf{Input:}}
\renewcommand{\algorithmicensure}{\textbf{Output:}}
    \caption{\textsc{MinRestartAmp}: restart-minimizing amplification selection}
    \label{alg:min_restart_amplification}
    \begin{algorithmic}[1]
        \REQUIRE Reference-error bound $\Delta$, statistical budget $\epsilon_{\stat}$, bias budget $\epsilon_{\bias}$, failure probability $p_{\fail}$, amplitude lower bound $\rho_0$ and branch margin $\chi$, hardware constraint $m_{\hw}$ and packing capacities $\kappa(m)$
        \STATE $m_{\mathrm{search}}\gets
            \min\{m_{\hw},\lfloor\chi/\Delta\rfloor\}$
        \IF{$m_{\mathrm{search}}<1$ \textbf{or}
            $B(1;\Delta,\rho_0)>\epsilon_{\bias}$}
            \STATE \textbf{Return infeasible}
        \ENDIF
        \IF{$\rho_0=1$}
            \STATE $m_{\mathrm{feas}}\gets m_{\mathrm{search}}$
        \ELSE
            \STATE Use binary search over $m=1,\ldots,m_{\mathrm{search}}$ to find the largest bias-feasible amplification $m_{\mathrm{feas}}$ such that $B(m;\Delta,\rho_0)\le\epsilon_{\bias}$
        \ENDIF
        \STATE Choose $m_{\mathrm{best}}$ in $m=1,\ldots,m_{\mathrm{feas}}$ requiring the fewest packed restarts $R_m$ and smaller amplification
        \STATE Return $m_{\mathrm{best}}$
    \end{algorithmic}
\end{algorithm}

\end{document}